\documentclass[10pt, twocolumn, comsoc]{IEEEtran}

\usepackage{graphicx,epsfig}
\usepackage[noadjust]{cite}
\usepackage{mcite}
\usepackage{amsfonts,helvet}
\usepackage{fancyhdr}
\usepackage{threeparttable}
\usepackage{epsf,epsfig}
\usepackage{amsthm}
\usepackage{amsmath}
\usepackage{siunitx}
\usepackage{amssymb}

\usepackage{dsfont}
\usepackage{subfigure}
\usepackage{color}
\usepackage[linesnumbered,noend,ruled]{algorithm2e}
\usepackage{algpseudocode}
\usepackage{algcompatible}
\usepackage{enumerate}
\usepackage{cancel}
\usepackage{multirow}
\usepackage{bbm}
\usepackage{graphicx,subfigure}
\usepackage{graphicx}

\newtheorem{lemma}{Lemma}

\usepackage{eucal}


\def\bff{{\bf f}}
\def\bg{{\bf g}}
\def\bh{{\bf h}}


\def\bR{{\bf R}}


\def\cC{\mbox{$\mathcal{C}$}}

\def\cN{\mbox{$\mathcal{N}$}}


\def\bbC{\mbox{$\mathbb{C}$}}

\def\bbE{\mbox{$\mathbb{E}$}}

\usepackage{graphicx,epsfig}
\usepackage[noadjust]{cite}
\usepackage{mcite}
\usepackage{amsfonts,helvet}
\usepackage{fancyhdr}
\usepackage{threeparttable}
\usepackage{epsf,epsfig}
\usepackage{amsthm}
\usepackage{amsmath}
\usepackage{siunitx}
\usepackage{amssymb}

\usepackage{dsfont}
\usepackage{subfigure}
\usepackage{color}
\usepackage[linesnumbered,ruled]{algorithm2e}
\usepackage{algpseudocode}
\usepackage{algcompatible}
\usepackage{enumerate}
\usepackage{cancel}
\usepackage{bbm}
\usepackage{graphicx,subfigure}
\usepackage{graphicx}

\usepackage{eucal}

\makeatletter
\def\blfootnote{\xdef\@thefnmark{}\@footnotetext}
\makeatother


\def\bff{{\bf f}}
\def\bg{{\bf g}}
\def\bh{{\bf h}}


\def\bR{{\bf R}}


\def\cC{\mbox{$\mathcal{C}$}}

\def\cN{\mbox{$\mathcal{N}$}}


\def\bbC{\mbox{$\mathbb{C}$}}

\def\bbE{\mbox{$\mathbb{E}$}}

\setcounter{page}{1}
\setcounter{proposition}{0}


\title{A Selective Secure Precoding Framework for MU-MIMO Rate-Splitting Multiple Access Networks Under Limited CSIT}

\author{Sangmin~Lee, \textit{ Student Member, IEEE},  Seokjun~Park, \textit{ Student Member, IEEE}, Jeonghun~Park, \textit{Member, IEEE}, and Jinseok~Choi, \textit{Member, IEEE}

\thanks{S. Lee is with the Department of Electrical Engineering, Ulsan National Institute of Science and Technology (UNIST), Ulsan, 44919, South Korea (e-mail: {\texttt{sangminlee@unist.ac.kr}}).

S. Park and J. Choi are with the School of Electrical Engineering, Korea Advanced Institute of Science and Technology (KAIST), Daejeon, 34141, South Korea (e-mail: {\texttt{\{sj.park, jinseok\}@kaist.ac.kr}}).

J. Park is with the School of Electrical and Electronic Engineering, Yonsei University, Seoul, 03722, South Korea (e-mail: {\texttt{jhpark@yonsei.ac.kr}}).

}}

\begin{document}

\maketitle \setcounter{page}{1} 
\begin{abstract}
    In this paper, we propose a robust and adaptable secure precoding framework designed to encapsulate a intricate scenario where legitimate users have different information security: secure private or normal public information.
    Leveraging rate-splitting multiple access (RSMA),
    we formulate the sum secrecy spectral efficiency (SE) maximization problem in downlink multi-user multiple-input multiple-output (MIMO) systems with multi-eavesdropper.
    To resolve the challenges including the heterogeneity of security, non-convexity, and non-smoothness of the  problem, we initially approximate the  problem  using a LogSumExp technique.
    Subsequently, we derive the first-order optimality condition in the form of a generalized eigenvalue problem.
    We utilize a power iteration-based method to solve the condition, thereby achieving  a superior local optimal solution.
    The proposed algorithm is further extended to a more realistic scenario involving limited channel state information at the transmitter (CSIT).
    To effectively utilize the limited channel information, we employ a conditional average rate approach.
    Handling the conditional average by deriving useful bounds, we establish a lower bound for the objective function under the conditional average.
    Then we apply the similar optimization method as for the perfect CSIT case.
    In simulations, we validate the proposed algorithm in terms of the sum secrecy SE.
\end{abstract}
\renewcommand\IEEEkeywordsname{Index Terms}
\begin{IEEEkeywords}
    Rate-splitting multiple access, physical layer security, precoding, and generalized power iteration.
\end{IEEEkeywords}
\blfootnote{This work was presented in part at the {\em IEEE Vehicular Technology Conference (IEEE VTC) Workshop}, Florence, Italy, 2023 \cite{lee2023rate}.}

\section{Introduction} \label{sec:intro}

The shift from the era of 5G to 6G marks a significant leap in the technological evolution of wireless communication, requiring careful consideration of various complex factors, not just increased transmission capacity \cite{yang20196g,zhang20196g}.
With the increasing complexity of wireless communication networks, the current era has seen a notable rise in the risks related to unauthorized access, eavesdropping, and data breaches \cite{zou2016survey}.
This highlights the urgent need for new solutions that not only meet the growing demands for faster data transmission but also strengthen the security of wireless systems.
In this regard, rate-splitting multiple access (RSMA) can be adopted to increase spectral efficiency (SE) by reducing inter-user interference \cite{joudeh2016robust}.
Concurrently, physical layer security techniques are used to enhance the security measures essential for safeguarding communications \cite{sulyman2022physical}.
Together, these strategies play a key role in building a strong framework that ensures both efficient and secure wireless communications.

\subsection{Prior Work}

Improving SE is a key focus in wireless communications research, driven by the growing need for more data bandwidth and better connectivity.
To manage the inter-user interference, the non-orthogonal multiple access (NOMA) methodology stands out for its effectiveness \cite{shin2017non}.
Moreover, RSMA as represented in \cite{clerckx2016rate}, has shown promise in effectively handling inter-user interference.
RSMA systems with low-resolution quantizers have shown to be a versatile and promising technology for future 6G networks \cite{park2023rate}.
More importantly, RSMA has demonstrated its superiority over conventional multiple access methods by bridging spatial-division multiple access (SDMA), orthogonal multiple access (OMA), and NOMA \cite{mao2018rate}.
RSMA was highlighted in \cite{joudeh2016sum} for its theoretical superiority in maximizing channel degree-of-freedom under imperfect channel state information (CSI) by mitigating interference.
A principle of RSMA involves the division of each user's transmission into two parts: the common and private messages \cite{clerckx2016rate}.
The common stream is decodable by all users through successive interference cancellation (SIC) when decoding private streams, thereby reducing interference and enhancing SE \cite{mao2018rate}.

The advantages of RMSA have drawn substantial research endeavors on RSMA beamforming design.
The potential of RSMA for boosting sum SE in multiple-input single-output (MISO)  channels was examined by using a  precoding strategy \cite{hao2015rate}.
To further elevate the SE, a linear precoding approach for the downlink MISO  system predicated on weighted minimum mean square error (WMMSE) was proposed in \cite{joudeh2016sum}.
Additionally, the design of RSMA precoders was articulated by casting the optimization problem into a convex form \cite{li2020rate}.
The optimization of rate allocation and power control within RSMA which aimed at maximizing the sum rate was explored in \cite{yang2021optimization}.
A hierarchical RSMA architecture, facilitating the decoding of more than one common stream based on hierarchy, was introduced for downlink massive multiple-input multiple-output (MIMO) systems \cite{dai2016rate}.
An RSMA precoder design algorithm, leveraging the generalized power iteration (GPI) method for sum SE maximization in the downlink multi-user MIMO (MU-MIMO) system considering channel estimation errors, was developed in \cite{park2022rate}.

In addition to the SE performance, it has been important  to guarantee the  information security.
For instance, as an exemplar among 6G applications, Internet-of-Things (IoT) systems encounter a pronounced vulnerability to the pervasive issue of wiretapping \cite{haider2020optimization}.
While cryptography was initially introduced as a method to ensure secure communication \cite{massey1988introduction}, its applicability encounters limitations in physically constrained environments, such as those prevalent in IoT scenarios.
In this regard, physical layer security has been receiving great attention thanks to its low computational complexity \cite{wyner1975wire}.
Starting with the proposal of Wyner \cite{wyner1975wire}, physical layer security has received constant attention, and many studies have been conducted \cite{{shannon1949communication}, {shiu2011physical}}.
Physical layer security stands out by ensuring a positive transmission rate for authorized or legitimate users \cite{gopala2008secrecy,khisti2008secure}.
Building on prior research, several precoding schemes have been proposed to enhance the secrecy SE.
The cases of a single eavesdropper were explored to evaluate the achievable secrecy rate in MISO channels \cite{shafiee2007achievable,5485016}.
Additionally, the scenario of multiple antennas eavesdropper was introduced in the context of MIMO system \cite{khisti2010secure}, and a colluding-eavesdropper scenario was considered in \cite{choi2021sum}.
Security in device-to-device systems was also studied \cite{choi2020securelinq}.
In more complex systems, the study in \cite{kampeas2016secrecy} investigated the secrecy rate in scenarios where multiple users and multiple eavesdroppers coexist within the network.
For a multi-user multi-eavesdropper network, the use of artificial noise (AN)-aided secure precoding was proposed  in \cite{wang2016secrecy, choi2022joint} to enhance secure communication.

Recently, there has been collaborative research focusing on the development of secure wireless communication techniques considering the RSMA approach.
The RSMA-based algorithm introduced in \cite{xia2022secure} considered the secure beamforming optimization for MISO networks under both perfect and imperfect channel state information at the transmitter (CSIT).
In \cite{fu2020robust}, the secure RSMA scheme was suggested in a two-user MISO system
with imperfect CSIT.
A cooperative rate-splitting technique to maximize the sum secrecy rate in the MISO broadcast channel system was suggested specifically for two legitimate users and one eavesdropper \cite{li2020cooperative}.
A novel RSMA-based secure transmission scheme with AN was further proposed in \cite{cai2021resource}.

Although many studies have offered valuable contributions in beamforming optimization for physical layer security, an adaptive secure beamforming for a comprehensive network scenario has not been tackled yet.
For instance, the works in \cite{xia2022secure,fu2020robust} did not assume any eavesdropper, only considering legitimate users as potential eavesdroppers.
In addition, the considered systems in \cite{li2020cooperative, cai2021resource} were restricted to a single eavesdropper.
Besides the existence of the eavesdropper, there can be  different conditions for required security among legitimate users, depending on the type of messages.
This needs a selective secure precoding approach to guarantee an optimal communication performance.
Accordingly, it is important to consider both legitimate users with different security requirements and explicit eavesdroppers for optimizing secure beamforming.

When considering such a comprehensive network, 
it is essential to account for limitations of CSIT: imperfect CSIT for legitimate users and partial CSIT for eavesdroppers.
In this regard, the fore-mentioned works \cite{li2020cooperative, cai2021resource} developed secure beamforming methods considering a perfect CSIT scenario, which implies the necessity of investigating RSMA secure beamforming under a limited CSIT scenario.
Therefore, it is desirable to develop an optimization framework encompassing not only  a general scenario of multi-user and multi-eavesdropper MIMO systems with different security priorities but also a limited CSIT assumption.
\begin{table*}[t]
    \caption{Comparative Analysis with Prior Research Endeavors}
    \label{tab}
    \centering
    {
    \begin{tabular}{|c|cc|cc|cc|c|c|}
    \hline \hline
    \multirow{2}{*}{\bf{Prior Work}} & \multicolumn{2}{c|}{\bf{Multiple}} & \multicolumn{2}{c|}{\bf{Partial CSIT}} & \multicolumn{2}{c|}{\bf{Potential Wiretappers}} & \multirow{2}{*}{\bf{AN}} & \multirow{2}{*}{\bf{RSMA}} \\ \cline{2-7}
     & \multicolumn{1}{c|}{\bf{Users}} & \bf{Eavesdroppers} & \multicolumn{1}{c|}{\bf{Users}} & \bf{Eavesdroppers} & \multicolumn{1}{c|}{\bf{Users}} & \bf{Eavesdroppers} &  &  \\ \hline
    Sum Rate Maximization WMMSE \cite{joudeh2016sum} & \multicolumn{1}{c|}{\checkmark} & N/A & \multicolumn{1}{c|}{\checkmark} & N/A & \multicolumn{1}{c|}{} & N/A &  & \checkmark \\ \hline
    Max-Min Fairness GPI \cite{lee2023max} & \multicolumn{1}{c|}{\checkmark} & \checkmark & \multicolumn{1}{c|}{\checkmark} & \checkmark & \multicolumn{1}{c|}{} & \checkmark &  &  \\ \hline
    Max-Min Fairness SCA \cite{fu2020robust} & \multicolumn{1}{c|}{\checkmark} & N/A & \multicolumn{1}{c|}{\checkmark} & N/A & \multicolumn{1}{c|}{\checkmark} & N/A &  & \checkmark \\ \hline
    Sum Secrecy Rate Maximization GPI \cite{choi2022joint} & \multicolumn{1}{c|}{\checkmark} & \checkmark & \multicolumn{1}{c|}{} & \checkmark & \multicolumn{1}{c|}{} & \checkmark & \checkmark &  \\ \hline
    Sum Secrecy Rate Maximization SCA \cite{xia2022secure} & \multicolumn{1}{c|}{\checkmark} & N/A & \multicolumn{1}{c|}{\checkmark} & N/A & \multicolumn{1}{c|}{\checkmark} & N/A &  & \checkmark \\ \hline
    Energy Efficiency SWIPT SCA \cite{lu2021worst} & \multicolumn{1}{c|}{\checkmark} & N/A & \multicolumn{1}{c|}{\checkmark} & N/A & \multicolumn{1}{c|}{\checkmark} & N/A &  & \checkmark \\ \hline
    Proposed & \multicolumn{1}{c|}{\checkmark} & \checkmark & \multicolumn{1}{c|}{\checkmark} & \checkmark & \multicolumn{1}{c|}{\checkmark} & \checkmark &  & \checkmark \\ \hline
    \end{tabular}
    }   
\end{table*}

\subsection{Contributions}
This paper presents a secure precoding optimization framework with selectively applicable physical layer security for MIMO-RSMA systems where multiple users and eavesdroppers coexist under limited CSIT.
Our primary contributions are summarized as follows:

\begin{itemize}
\item
To present a comprehensive precoding optimization framework concerning physical layer security in RSMA systems, we consider three distinct groups in the network: the secret user group which requires  information security in their private stream; the normal user group which operates without security concerns; and the eavesdroppers who are not a legitimate users and  wiretap secret users' private stream.
Each secret user assumes all the other users to be potential eavesdroppers as well as the actual eavesdroppers.
In the considered system, our focus lies in maximizing the sum secrecy SE while enforcing secure precoding for the secret user group and regular precoding for the normal user group.


\item 
We first propose a selective secure precoding framework to maximize the sum secrecy SE under perfect CSIT, addressing several key challenges such as the non-convexity and non-smoothness.
In particular, the common rate of RSMA envolves a min function for decodability, and the secrecy SE is determined by selecting the maximum wiretap SE among eavesdroppers.
To overcome these issues, we approximate the non-smooth objective function to obtain a smooth sum secrecy SE expression.
Subsequently, we reformulate the problem into a more tractable non-convex form, expanding it into a higher-dimensional space.
We derive the first-order optimality condition, interpreting it as a generalized eigenvalue problem, and employ a power iteration-based algorithm to find the superior stationary point.

\item 
We extend the proposed algorithm to the case of limited CSIT for both the legitimate users and eavesdroppers.
To take advantage of the limited CSIT, we adopt the concept of the ergodic SEs.
Given that only the estimated legitimate channel and long-term channel statistics are accessible at the access point (AP), we then employ a conditional averaged secrecy SE approach and derive a lower bound of the objective function as a function of the available channel knowledge.
For the derived objective function, we apply a similar optimization framework as proposed in the case of perfect CSIT.
Consequently, fully exploiting the available channel information through the reformulation and lower bounding, the proposed algorithm becomes more robust to the limited CSIT.
Table~\ref{tab} summarizes our contributions comparing with the state-of-the-art methods.
Based on the comparison, our algorithm demonstrates greater generality, making it highly adaptable to a wide range of physical layer security systems, surpassing the flexibility of existing methods.



\item 
Via simulations, we validate the performance of the proposed robust and selective secure precoding method within RSMA networks.
By adjusting the three groups, the algorithm demonstrates improved generalization, resulting in superior performance compared to the benchmark methods.
In particular, it is observed that accommodating different security priorities among users can achieve a significant gain in the sum secrecy SE.
Therefore, we conclude that the algorithm effectively provides a comprehensive framework that accommodates a wide range of scenarios in multi-user and multi-eavesdropper MIMO systems under both perfect and limited CSIT.

\end{itemize}

{\it Notation}: $\bf{A}$ is a matrix and $\bf{a}$ is a column vector. 
$(\cdot)^{\sf T}$, $(\cdot)^{\sf H}$, and $(\cdot)^{-1}$ denote transpose, Hermitian, and matrix inversion, respectively.
$\bbC$ denotes the complex domain.
${\bf{I}}_N$ is the identity matrix with size $N \times N$, and $\bf 0$ represents a matrix or column vector with a proper size.
For ${\bf{A}}_1, \dots, {\bf{A}}_N \in \mathbb{C}^{K \times K}$, ${\bf{A}} = {\rm blkdiag}\left({\bf{A}}_1,\dots, {\bf{A}}_N \right) \in \bbC^{KN\times KN}$ is a block diagonal matrix. $\|\bf A\|$ represents L2 norm.
$\mathbb{E}[\cdot]$, ${\rm Tr}(\cdot)$, and $\otimes$ represent  expectation operator, trace operator, and Kronecker product, respectively.
The vector ${\bf{e}}_i$ is defined as a column vector with a single 1 element at the $i$th position and all other elements set to 0.
We use ${\rm{vec}}(\cdot)$ for vectorization.
$\cC\cN(\mu,\sigma^2)$ indicates a complex Gaussian distribution with mean $\mu$ and variance $\sigma^2$.
$[\cdot]^+$ represents $\max[\cdot, 0]$.

\section{System Model} \label{sec:sys_model}

We consider a single-cell downlink MU-MIMO system where an AP equipped with $N$ antennas serves $K$ users with a single antenna.
In the considered system, we assume the existence of two distinct legitimate user classes: one is the secret user group whose information security needs to be guaranteed, and the other is the normal user group whose information security is not required, e.g., public information.
Our system also accommodates multiple 
 eavesdroppers, each equipped with a single antenna.
An example scenario of our system model is shown in Fig.~\ref{figure_1}.
Let us define the total user set, secret user set, normal user set, and eavesdropper set as $\CMcal{K} = \{1, ... ,K\}$, $\CMcal{S} = \{1, ... ,S\}$, $\CMcal{M} = \{S+1, ... ,S+M\}$, and $\CMcal{E} = \{1, ... ,E\}$, respectively.
We assume  $K = S + M$.

\subsection{Signal Model}
\begin{figure}[!t]
    \centerline{\includegraphics[width=\columnwidth]{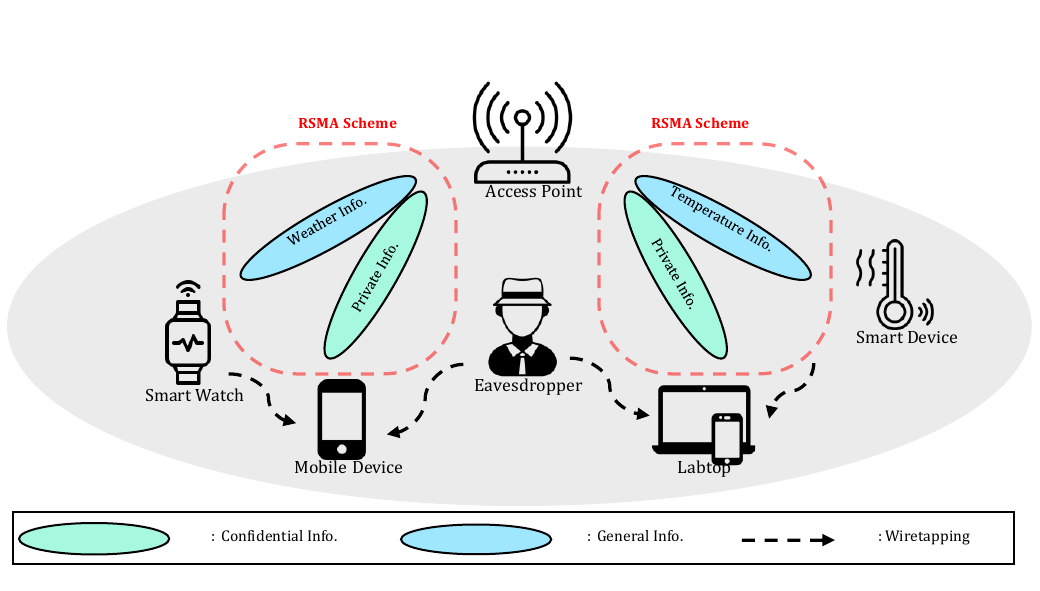}}
    \caption{An example scenario of a multiuser MIMO downlink system employing RSMA with multiple eavesdroppers.
    }
    \label{figure_1}
\end{figure}

We consider the single-layer RSMA method \cite{clerckx2016rate}.
The RSMA-encoded message consists of the common stream $s_{\sf c}$ and private stream $s_{k}$, $\forall k\in\CMcal{K}$.
We consider $s_{\sf c} \sim \mathcal{CN}(0,{P})$ and $s_{k} \sim \mathcal{CN}(0,{P})$.
Our approach entails the design of the common stream $s_{\sf c}$ to ensure decoding feasibility for all $K$ users, while the private stream $s_{s}$, ${ s}\in\CMcal{S}$ intended for secret users is crafted with a focus on information security considerations.
As established earlier, the private stream $s_{m}$, $m\in\CMcal{M}$ designated for normal users operates without necessitating security considerations.
We operate under the assumption that only the private stream of the secret user carries critical information, while the common stream is comprised of less sensitive data that does not necessitate stringent security measures as discussed in \cite{xia2022secure}.
Consequently, our objective revolves around safeguarding the confidentiality of the private streams belonging to the secret users by preventing any information leakage.

We represent the transmit signal ${\bf{x}} \in \mathbb{C}^{N}$ using linear precoding as follows: 
\begin{align} \label{eq:transmit_signal}
    {\bf{x}} = {\bf{F}}{\bf{s}} = {\bf{f}}_{\sf c} s_{\sf c} + \sum_{i = 1}^{K} {\bf{f}}_i s_i,
\end{align}
where ${\bf{F}}=[{\bf{f}}_{\sf{c}},{\bf{f}}_1,...,{\bf{f}}_K]$ with ${\bf f}_{\sf c} \in \mathbb{C}^N$ and ${\bf f}_i \in \mathbb{C}^N$, $\forall i\in\CMcal{K}$ are the $N \times 1$ precoding vectors for the common and private streams, respectively.
The transmit power constraint is defined as $\left\|{\bf{f}}_{\sf c} \right\|^2 + \sum_{i=1}^{K} \left\| {\bf{f}}_i \right\|^2 \le 1$.
The received signal of user $k$ is 
\begin{align} \label{eq:signal_received}
    y_{k} = {\bf{h}}_{k}^{\sf H} {\bf{f}}_{\sf c} s_{\sf c} +{\bf{h}}_{k}^{\sf H} {\bf{f}}_{k} s_{k}  +  \sum_{i = 1, i \neq k}^{K} {\bf{h}}_{k}^{\sf H} {\bf{f}}_{i} s_{i} + z_k,
\end{align}
where $z_k \sim \mathcal{CN}(0,\sigma^2)$ is additive white Gaussian noise (AWGN), and
the channel vector connecting the AP and  user $k$ is represented as ${\bf{h}}_k \in \mathbb{C}^{N}$ where $k \in \CMcal{K}$.
Similarly, the received signal of eavesdropper ${\sf{e}}$ is
\begin{align}
    \label{eq:signal_received_eve}
    y_{\sf{e}} = {\bf{g}}_{\sf{e}}^{\sf H} {\bf{f}}_{\sf c} s_{\sf c} + \sum_{i = 1}^{K} {\bf{g}}_{\sf{e}}^{\sf H} {\bf{f}}_i s_i + z_{\sf{e}},
\end{align}
where $z_{\sf{e}} \sim \mathcal{CN}(0,\sigma_{\sf{e}}^2)$ is AWGN at eavesdropper ${\sf{e}}$, and a channel vector between the AP and the eavesdropper ${\sf{e}}$ is denoted as ${\bf{g}}_{\sf{e}} \in \mathbb{C}^{N}$ where ${\sf{e}} \in \CMcal{E}$.

\section{Problem Formulation}
Initially, each user undertakes the decoding of the common stream $s_c$, treating all remaining private streams $s_k$, $k \in \CMcal{K}$ as interference.
Subsequently, upon the successful decoding of the common stream, employing SIC, users eliminate the common stream from the received signal.
This enables them to decode the private streams with reduced interference levels.

To evaluate the performance metrics associated with RSMA, we formulate the SE  of the common stream $s_{\sf c}$ using  the received signal $y_k$ in \eqref{eq:signal_received}.
All legitimate users should completely decode the common stream part without any error for perfect SIC.
Consequently, the SE of the common stream is formulated as follows:
\begin{align}
    \nonumber
    R_{\sf c}({\bf{F}}) &= \min_{k \in \CMcal{K}}\left\{\log_2 \left(1 + \frac{|{\bf{h}}_{k}^{\sf H} {\bf{f}}_{\sf c}|^2}{\sum_{i \in \CMcal{K}} |{\bf{h}}_{k}^{\sf H} {\bf{f}}_i|^2 + \frac{\sigma^2}{P}} \right)\right\}
    \\ \label{eq:SE_common}
    &= \min_{k \in \CMcal{K}}\left\{ R_{{\sf c},k}({\bf{F}})\right\}.
\end{align}
After perfectly decoding and cancelling the common stream at each user using SIC, the SE of the private stream $s_k$ is 
\begin{align}
    \label{eq:SE_user}
    R_{k}({\bf{F}} ) &= \log_2 \left(1 + \frac{|{\bf{h}}_{k}^{\sf H} {\bf{f}}_{k}|^2}{\sum_{i \in \CMcal{K}\setminus\{k\} } |{\bf{h}}_{k}^{\sf H} {\bf{f}}_i|^2 + \frac{\sigma^2}{P}}\right).
\end{align}
We note that the interference of the common stream term does not exist in the interference term of the formulated SE of the private stream in \eqref{eq:SE_user} due to SIC.

Next, we proceed with computing the leakage SE denoted as $R^{(s)}({\bf{F}})$ pertaining to the private stream of secret users.
The leakage stems not only from eavesdroppers but also from other users.
To expound on several scenarios, with some abuse of notations, we introduce the concept of the leakage set $\CMcal{L}(s) = \CMcal{E} \cup \CMcal{K}\setminus\!\{s\}$.
Then, 
we formulate the maximum achievable leakage SE for secret user $s \in \CMcal{S}$ which is defined as the maximum SE of the symbol of user $s$ among the eavesdroppers and the other legitimate users:
\begin{align} \label{eq:SE_leakage}
    R^{(s)}({\bf{F}}) &= \max_{l \in \CMcal{L}(s)}\{R_l^{(s)}({\bf{F}})\} = \max_{{\sf{e}} \in \CMcal{E}, u \in \CMcal{K}\setminus\{s\}} \left\{R_{\sf{e}}^{(s)}({\bf{F}}),R_u^{(s)}({\bf{F}})\right\},
\end{align}
where $R_{\sf{e}}^{(s)}({\bf{F}})$ is the wiretap SE for user $s$  from eavesdropper ${\sf e}$ and $R_u^{(s)}({\bf{F}})$ is the wiretap SE for user $s$  from the other legitimate user $u$. 
The wiretap SE from eavesdropper $\sf e$ is
\begin{align}
    R_{\sf{e}}^{(s)}({\bf{F}}) = \log_2 \left(1 + \frac{|{\bf{g}}_{\sf{e}}^{\sf H} {\bf{f}}_{s}|^2}{|{\bf{g}}_{\sf{e}}^{\sf H} {\bf{f}}_{\sf{c}}|^2 + \sum_{i \in \CMcal{K}\setminus\{s\}} |{\bf{g}}_{\sf{e}}^{\sf H} {\bf{f}}_i|^2 + \frac{\sigma_{\sf{e}}^2}{P}}\right).
\end{align}
Subsequently,
we present the wiretap SE from legitimate user $u$ as
\begin{align}
    R_u^{(s)}({\bf{F}}) = \log_2 \left(1 + \frac{|{\bf{h}}_{u}^{\sf H} {\bf{f}}_{s}|^2}{\sum_{i \in \CMcal{K}\setminus\{u,s\}} |{\bf{h}}_{u}^{\sf H} {\bf{f}}_i|^2 + \frac{\sigma^2}{P}}\right).
\end{align}

Base on \eqref{eq:SE_common}, \eqref{eq:SE_user}, and \eqref{eq:SE_leakage}, we formulate the sum SE maximization problem as
\begin{align}
    \label{eq:op}
    \mathop{{\text{maximize}}}_{{\bf{f}}_{\sf c}, {\bf{f}}_1, \cdots,{\bf{f}}_K}& \;\; R_{\sf c}({\bf{F}}) \!+\! \sum_{s\in {\CMcal{S}}} \left[R_{s}({\bf{F}}) - R^{(s)}({\bf{F}})\right]^+ \!+\! \sum_{m \in {\CMcal{M}}} R_m({\bf{F}}) \\
    \label{eq:op_txpower}
    {\text{subject to}} & \;\; \left\| {\bf{f}}_{\sf c} \right\|^2 + \sum_{k = 1}^{K} \left\| {\bf{f}}_k \right\|^2  \le 1,
\end{align}
where \eqref{eq:op_txpower} is the transmit power constraint.
It is observed that $R_k({\bf{F}})$ in \eqref{eq:SE_user} is segregated into distinct components: $R_s({\bf{F}})$ for secret users and $R_m({\bf{F}})$ for normal users, where $s \in \CMcal{S}$ and $m \in \CMcal{M}$.
We note that the secrecy SE in \eqref{eq:op} involves the leakage SE due to the legitimate users themselves as well as the eavesdroppers, which requires more careful optimization of the precoder.  
In addition, the problem also includes a non-smooth function due to the rate of the comon stream of  RSMA and is inherently non-convex with respect to the precoder.
In the subsequent sections, addressing the challenges, we solve the problem in \eqref{eq:op} with the assumptions of the perfect CSIT and  limited CSIT where the AP has the imperfect CSIT of legitimate users and partial CSIT of eavesdroppers.

\section{Selective Secure Rate-Splitting Precoding with Perfect CSIT}

In this section, we develop selective secure rate-splitting precoding method with perfect CSIT by resolving the fore-mentioned challenges in the formulated problem.
We reformulate both the SE of the common stream $R_{{\sf c},k}$ in \eqref{eq:SE_common} and private stream $R_k$ in \eqref{eq:SE_user} to adopt the GPI approach \cite{choi2019joint}.
To this end, let us first define the stacked precoding vector as ${\bar{\bf{f}}} = \left[{\bf{f}}_{\sf {c}}^{\sf{T}},{\bf{f}}_1^{\sf{T}},...,{\bf{f}}_{K}^{\sf{T}} \right]^{\sf{T}} \in \bbC^{{N(K+1)}\times 1}$.
Then we assume the maximum transmit power, i.e., $\|\bar{\bf f}\|^2 =1$ as increasing the transmit power in general increases the secrecy SE \cite{bloch2008wireless,oh2023joint}.
Under this transmit power assumption, we reformulate the derived SEs in \eqref{eq:SE_common} and  \eqref{eq:SE_user} as
\begin{align} \label{eq:Rayleigh common and private}
    R_{{\sf c},k}({\bf{\bar{f}}}) = \log_2 \left( \frac{\bar{\bf{f}}^{\sf{H}} {\bf{A}}_{{\sf{c}},k}\bar{\bf{f}}}{\bar{\bf{f}}^{\sf{H}} {\bf{B}}_{{\sf{c}},k}\bar{\bf{f}}} \right),
    R_k({\bf{\bar{f}}}) = \log_2 \left( \frac{\bar{\bf{f}}^{\sf{H}} {\bf{A}}_{k} \bar{\bf{f}}}{\bar{\bf{f}}^{\sf{H}} {\bf{B}}_{k}\bar{\bf{f}}} \right),
\end{align}
where
\begin{align}
    &{\bf{A}}_{{\sf{c}},k} = {\bf{I}}_{K+1} \otimes {\bh}_{k} {\bf{h}}_k^{\sf H} + {\bf{I}}_{N(K+1)}\frac{\sigma^2}{P},\\
    &{\bf{B}}_{{\sf{c}},k} = {\bf{A}}_{{\sf{c}},k} - \rm{diag}\left\{{\bf{e}}_{1}\right\} \otimes {\bh}_{k} {\bf{h}}_k^{\sf H},\\
    &{\bf{A}}_{k} = \left({\bf{I}}_{K+1} - \rm{diag}\left\{{\bf{e}}_{1}\right\}\right) \otimes {\bh}_{k} {\bf{h}}_k^{\sf H} + {\bf{I}}_{N(K+1)}\frac{\sigma^2}{P},\\
    &{\bf{B}}_{k} = {\bf{A}}_{k} - {\rm{diag}}\left\{{\bf{e}}_{k+1}\right\} \otimes {\bh}_{k} {\bf{h}}_k^{\sf H}.
\end{align}

Likewise, the wiretap SE for user $s \in \CMcal{S}$ from eavesdropper ${\sf e}$ can be redefined as
\begin{align} 
    \label{eq:Rayleigh leakage}
    R_{\sf e}^{(s)}({\bf{\bar{f}}}) = \log_2 \left( \frac{\bar{\bf{f}}^{\sf{H}} {\bf{A}}_{\sf{e}}^{(s)}\bar{\bf{f}}}{\bar{\bf{f}}^{\sf{H}} {\bf{B}}_{\sf{e}}^{(s)}\bar{\bf{f}}} \right),
\end{align}
where
\begin{align}
    &{\bf{A}}_{\sf{e}}^{(s)} = {\bf{I}}_{K+1} \otimes {\bg}_{\sf{e}} {\bf{g}}_{\sf{e}}^{\sf H} + {\bf{I}}_{N(K+1)}\frac{\sigma_{\sf{e}}^2}{P},\\
    &{\bf{B}}_{\sf{e}}^{(s)} = {\bf{A}}_{\sf{e}}^{(s)} - {\rm{diag}}\left\{{\bf{e}}_{s+1}\right\} \otimes {\bg}_{\sf{e}} {\bf{g}}_{\sf{e}}^{\sf H},
\end{align}
and the wiretap SE for user $s \in \CMcal{S}$ from  legitimate user $u \in \CMcal{K}\setminus\{s\}$ is reformulated as
\begin{align}
    \label{eq:Rayleigh leakage2}
    R_{u}^{(s)}({\bf{\bar{f}}}) = \log_2 \left( \frac{\bar{\bf{f}}^{\sf{H}} {\bf{C}}_u^{(s)}\bar{\bf{f}}}{\bar{\bf{f}}^{\sf{H}} {\bf{D}}_u^{(s)}\bar{\bf{f}}} \right),
\end{align}
where
\begin{align}
     &{\bf{C}}_{u}^{(s)} = \left({\bf{I}}_{K+1} - {\rm{diag}}\left\{{\bf{e}}_{1} + {\bf{e}}_{u+1}\right\}\right) \otimes {\bf{h}}_{u} {\bf{h}}_u^{\sf H} + {\bf{I}}_{N(K+1)}\frac{\sigma^2}{P},\\
    &{\bf{D}}_{u}^{(s)} = {\bf{C}}_u^{(s)} - {\rm{diag}}\left\{{\bf{e}}_{s+1}\right\} \otimes {\bf{h}}_{u} {\bf{h}}_u^{\sf H}.
\end{align}

Now, to address the non-smoothness challenge, we employ a LogSumExp approach \cite{shen2010dual} for smooth approximation of non-smooth functions: 
\begin{align}
    \label{eq:logsumexp_min}
    \min_{i = 1,...,N}\{x_i\}  \approx -\alpha \log\left(\sum_{i = 1}^{N} \exp\left( \frac{x_i}{-\alpha}  \right)\right),\\
    \label{eq:logsumexp_max}
    \max_{i = 1,...,N}\{x_i\}  \approx \alpha \log\left(\sum_{i = 1}^{N} \exp\left( \frac{x_i}{\alpha}  \right)\right),
\end{align}
where $\alpha >0 $ and the approximation becomes tight as $\alpha \rightarrow +0$.

Based on \eqref{eq:Rayleigh common and private}, \eqref{eq:Rayleigh leakage}, \eqref{eq:Rayleigh leakage2},  \eqref{eq:logsumexp_min}, and \eqref{eq:logsumexp_max}, the original problem in \eqref{eq:op} is reformulated with  relaxing $[\cdot]^+$ as
\begin{align}
    \label{eq:op_Rayleigh}
    \nonumber
    \mathop{{\text{maximize}}}_{{\bf{f}}_{\sf c}, {\bf{f}}_1, \cdots,{\bf{f}}_K}& \;\; \ln\left(\sum_{k \in \CMcal{K}} \left(\frac{\bar{\bf{f}}^{\sf{H}} {\bf{A}}_{{\sf{c}},k}\bar{\bf{f}}}{\bar{\bf{f}}^{\sf{H}} {\bf{B}}_{{\sf{c}},k}\bar{\bf{f}}}\right)^{-\frac{1}{{\alpha}\ln2}}\right)^{-\alpha} + \sum_{s\in {\CMcal{S}}} \left[\log_2 \left( \frac{\bar{\bf{f}}^{\sf{H}} {\bf{A}}_{s} \bar{\bf{f}}}{\bar{\bf{f}}^{\sf{H}} {\bf{B}}_{s}\bar{\bf{f}}} \right) \right. \\
    \nonumber
    & \left. - \ln\left(\sum_{{\sf{e}} \in \CMcal{E}} \left( \frac{\bar{\bf{f}}^{\sf{H}} {\bf{A}}_{\sf{e}}^{(s)}\bar{\bf{f}}}{\bar{\bf{f}}^{\sf{H}} {\bf{B}}_{\sf{e}}^{(s)}\bar{\bf{f}}} \right)^{\frac{1}{\alpha\ln2}}\! \!+\! \!\sum_{u \in \CMcal{K}\setminus\{s\}} \left( \frac{\bar{\bf{f}}^{\sf{H}} {\bf{C}}_u^{(s)}\bar{\bf{f}}}{\bar{\bf{f}}^{\sf{H}} {\bf{D}}_u^{(s)}\bar{\bf{f}}} \right)^{\frac{1}{\alpha\ln2}}\right)^\alpha\right] \\
    &+ \sum_{m\in {\CMcal{M}}} \log_2 \left( \frac{\bar{\bf{f}}^{\sf{H}} {\bf{A}}_{m} \bar{\bf{f}}}{\bar{\bf{f}}^{\sf{H}} {\bf{B}}_{m}\bar{\bf{f}}} \right) \\
    \label{eq:op_Rayleigh_constraint}
    {\text{subject to}} & \;\;  \left\| {\bf{f}} \right\|^2 = 1.
\end{align}

We note that the reformulated optimization problem in \eqref{eq:op_Rayleigh} remains non-convex with respect to the stacked precoding vector, making the search for a global optimal solution still NP-hard. 
Instead, we present an approach to obtain a superior local optimal solution for the reformulated optimization problem.
To delve into this investigation, we first note that the reformulated problem in \eqref{eq:op_Rayleigh} is scale-invariant for the precoder $\bar {\bf f}$. 
Accordingly, we derive the first-order optimality condition of \eqref{eq:op_Rayleigh} by ignoring the power constraint to make the problem more tractable, as outlined in the following lemma:
\begin{lemma}
    \label{lem:KKT}
    With perfect CSIT, the first-order optimality condition of problem \eqref{eq:op_Rayleigh} without  the power constraint is satisfied if the following condition holds:
    \begin{align} \label{eq:KKT_condition}
        {\bf{B}}_{\sf{KKT}}^{-1}(\bar{{\bf{f}}}){\bf{A}}_{\sf{KKT}}(\bar{{\bf{f}}})\bar{{\bf{f}}} = {\lambda}(\bar{{\bf{f}}})\bar{{\bf{f}}},
    \end{align}
    where
    \begin{align} \label{eq:lambda_KKT}
        &\lambda\left(\bar{\bf{f}}\right) = \left[\sum_{k \in \CMcal{K}} w_{{\sf{c}},k}(\bar{\bf{f}})\right]^{-\alpha} \left[\prod_{s \in \CMcal{S}}\left( \frac{\bar{\bf{f}}^{\sf{H}} {\bf{A}}_{s} \bar{\bf{f}}}{\bar{\bf{f}}^{\sf{H}} {\bf{B}}_{s}\bar{\bf{f}}} \right)^{\frac{1}{\ln2}} \right. \\
        \nonumber
        & \left. - \prod_{s \in \CMcal{S}}\left(\sum_{{\sf{e}} \in \CMcal{E}} w_{{\sf{e}}}^{(s)}(\bar{\bf{f}}) + \sum_{u \in \CMcal{K}\setminus\{s\}} w_{u}^{(s)}(\bar{\bf{f}})\right)^\alpha \right] \prod_{m \in \CMcal{M}} \left(\frac{\bar{\bf{f}}^{\sf{H}} {\bf{A}}_{m} \bar{\bf{f}}}{\bar{\bf{f}}^{\sf{H}} {\bf{B}}_{m}\bar{\bf{f}}}\right)^{\frac{1}{\ln2}},
    \end{align}

    \begin{align} \label{eq:A_KKT_condition}
        \nonumber
        &{\bf{A}}_{\sf{KKT}}(\bar{{\bf{f}}}) = \lambda_{\sf{num}}(\bar{\bf{f}})   \left[ \sum_{k \in \CMcal{K}}\left(\frac{w_{{\sf{c}},k}(\bar{\bf{f}})}{\sum_{l \in \CMcal{K}} w_{{\sf{c}},l}(\bar{\bf{f}})}\frac{{{\bf{A}}_{{\sf{c}},k}}}{{\bar{{\bf{f}}}}^{\sf{H}}{\bf{A}}_{{\sf{c}},k}{\bar{{\bf{f}}}}}\right) + \sum_{s\in {\CMcal{S}}} \frac{{\bf{A}}_s}{{\bar{{\bf{f}}}}^{\sf{H}}{\bf{A}}_s{\bar{{\bf{f}}}}} \right. \\
        \nonumber
        & \left. + \sum_{s \in \CMcal{S}} \left(\!\!\frac{\sum_{{\sf{e}} \in \CMcal{E}}\!\left(\!w_{\sf{e}}^{(s)}(\bar{\bf{f}})\!\left(\!\frac{{\bf{B}}_{\sf{e}}^{(s)}}{{\bar{{\bf{f}}}}^{\sf{H}}{\bf{B}}_{\sf{e}}^{(s)}{\bar{{\bf{f}}}}}\right)\right) + \sum_{u \in \CMcal{K}\setminus\{s\}}\left(w_{u}^{(s)}(\bar{\bf{f}})\!\left(\!\frac{{\bf{D}}_u^{(s)}}{{\bar{{\bf{f}}}}^{\sf{H}}{\bf{D}}_u^{(s)}{\bar{{\bf{f}}}}}\right)\right)}{\sum_{i \in \CMcal{E}}w_{i}^{(s)}(\bar{\bf{f}}) + \sum_{i \in \CMcal{K}\setminus\{s\}}w_{i}^{(s)}(\bar{\bf{f}})}\!\!\right) \right. \\
        & \left. + \sum_{m\in {\CMcal{M}}} \frac{{\bf{A}}_m}{{\bar{{\bf{f}}}}^{\sf{H}}{\bf{A}}_m{\bar{{\bf{f}}}}} \right]
    \end{align}

    \begin{align} \label{eq:B_KKT_condition}
        \nonumber
        &{\bf{B}}_{\sf{KKT}}(\bar{{\bf{f}}}) = \lambda_{\sf{den}}(\bar{\bf{f}})  \left[ \sum_{k \in \CMcal{K}}\left(\frac{w_{{\sf{c}},k}(\bar{\bf{f}})}{\sum_{l \in \CMcal{K}}w_{{\sf{c}},l}(\bar{\bf{f}})}\frac{{{\bf{B}}_{{\sf{c}},k}}}{{\bar{{\bf{f}}}}^{\sf{H}}{\bf{B}}_{{\sf{c}},k}{\bar{{\bf{f}}}}}\right) + \sum_{s\in {\CMcal{S}}} \frac{{\bf{B}}_s}{{\bar{{\bf{f}}}}^{\sf{H}}{\bf{B}}_s{\bar{{\bf{f}}}}} \right. \\
        \nonumber
        & \left. + \sum_{s\in \CMcal{S}} \left(\!\!\frac{\sum_{{\sf{e}} \in \CMcal{E}}\!\left(\!w_{{\sf{e}}}^{(s)}(\bar{\bf{f}})\!\left(\!\frac{{\bf{A}}_{\sf{e}}^{(s)}}{{\bar{{\bf{f}}}}^{\sf{H}}{\bf{A}}_{\sf{e}}^{(s)}{\bar{{\bf{f}}}}}\right)\right) + \sum_{u \in \CMcal{K}\setminus\{s\}}\left(w_{u}^{(s)}(\bar{\bf{f}})\!\left(\!\frac{{\bf{C}}_u^{(s)}}{{\bar{{\bf{f}}}}^{\sf{H}}{\bf{C}}_u^{(s)}{\bar{{\bf{f}}}}}\right)\right)}{\sum_{i \in \CMcal{E}}w_{i}^{(s)}(\bar{\bf{f}}) + \sum_{i \in \CMcal{K}\setminus\{s\}}w_{i}^{(s)}(\bar{\bf{f}})}\!\!\right) \right. \\
        & \left. + \sum_{m\in {\CMcal{M}}} \frac{{\bf{B}}_m}{{\bar{{\bf{f}}}}^{\sf{H}}{\bf{B}}_m{\bar{{\bf{f}}}}}\right]
    \end{align}
    with $w_{{\sf{c}},k}(\bar{\bf{f}}) = \left(\frac{\bar{\bf{f}}^{\sf{H}} {\bf{A}}_{{\sf{c}},k}\bar{\bf{f}}}{\bar{\bf{f}}^{\sf{H}} {\bf{B}}_{{\sf{c}},k}\bar{\bf{f}}}\right)^{-\frac{1}{\alpha\ln{2}}}$,
    $w_{{\sf{e}}}^{(s)}(\bar{\bf{f}}) = \left(\frac{\bar{\bf{f}}^{\sf{H}} {\bf{A}}_{\sf{e}}^{(s)}\bar{\bf{f}}}{\bar{\bf{f}}^{\sf{H}} {\bf{B}}_{\sf{e}}^{(s)}\bar{\bf{f}}}\right)^{\frac{1}{\alpha\ln{2}}}$, and
    $w_{u}^{(s)}(\bar{\bf{f}}) = \left(\frac{\bar{\bf{f}}^{\sf{H}} {\bf{C}}_u^{(s)}\bar{\bf{f}}}{\bar{\bf{f}}^{\sf{H}} {\bf{D}}_u^{(s)}\bar{\bf{f}}}\right)^{\frac{1}{\alpha\ln{2}}}$.
    Here, $\lambda_{\sf num}(\bar \bff)$ and $\lambda_{\sf den}(\bar \bff)$ can be any functions of $\bar \bff$ which satisfies $\lambda(\bar \bff) = {\lambda_{\sf{num}}\left(\bar{\bf{f}}\right)}/{\lambda_{\sf{den}}\left(\bar{\bf{f}}\right)}$.
\end{lemma}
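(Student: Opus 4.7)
My plan is to derive \eqref{eq:KKT_condition} by computing the Wirtinger derivative $\partial F/\partial \bar{\bf f}^*$ of the (LogSumExp-smoothed) objective of \eqref{eq:op_Rayleigh} and setting it to zero. Since the objective of \eqref{eq:op_Rayleigh} is scale-invariant in $\bar{\bf f}$, the unit-norm constraint may be dropped for the purpose of stationarity analysis (a rescaling after the fact recovers $\|\bar{\bf f}\|^2 = 1$). The raw stationary condition will take the form $\tilde{\bf A}(\bar{\bf f})\bar{\bf f} = \tilde{\bf B}(\bar{\bf f})\bar{\bf f}$, and the task is to reshape it into the generalized eigenvalue form claimed in the lemma.

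First, I would partition the objective $F$ into four groups: the smoothed common stream, the secret-user private SEs $R_s$, the smoothed leakages for each $s\in\CMcal{S}$, and the normal-user SEs $R_m$. For every Rayleigh-quotient log-ratio I would use the identity
\begin{align*}
\partial_{\bar{\bf f}^*}\ln\frac{\bar{\bf f}^{\sf H}\bM\bar{\bf f}}{\bar{\bf f}^{\sf H}\bN\bar{\bf f}} \;=\; \frac{\bM\bar{\bf f}}{\bar{\bf f}^{\sf H}\bM\bar{\bf f}} \;-\; \frac{\bN\bar{\bf f}}{\bar{\bf f}^{\sf H}\bN\bar{\bf f}},
\end{align*}
and propagate it through the outer $(\cdot)^{\pm 1/(\alpha\ln 2)}$ and $\mp\alpha\ln(\cdot)$ layers via the chain rule. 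The softmax-type weights that emerge from differentiating the outer $\ln(\sum\cdot)$ are exactly $w_{{\sf c},k}(\bar{\bf f})$, $w_{\sf e}^{(s)}(\bar{\bf f})$, and $w_u^{(s)}(\bar{\bf f})$ in the lemma statement, with the normalizations $\sum_l w_{{\sf c},l}(\bar{\bf f})$ and $\sum_i w_i^{(s)}(\bar{\bf f})$ arising from the derivative of $-\alpha\ln(\cdot)$ and $+\alpha\ln(\cdot)$ respectively. A short sign check confirms that for the beneficial SEs ($R_{{\sf c},k}, R_s, R_m$) the numerator matrices $\bA_{{\sf c},k},\bA_s,\bA_m$ contribute positively to the gradient while the denominator matrices $\bB_{{\sf c},k},\bB_s,\bB_m$ contribute negatively, and for the leakage terms the roles flip because they enter $F$ with an overall minus sign, so $\bB_{\sf e}^{(s)}$ and $\bD_u^{(s)}$ end up on the positive side while $\bA_{\sf e}^{(s)}$ and $\bC_u^{(s)}$ end up on the negative side.

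Collecting positive contributions in $\tilde{\bf A}(\bar{\bf f})$ and negative ones in $\tilde{\bf B}(\bar{\bf f})$ reproduces precisely the bracketed sums of \eqref{eq:A_KKT_condition}--\eqref{eq:B_KKT_condition}; setting the total Wirtinger derivative to zero then yields $\tilde{\bf A}(\bar{\bf f})\bar{\bf f} = \tilde{\bf B}(\bar{\bf f})\bar{\bf f}$. Exploiting scale invariance, I may multiply each side by arbitrary scalar functionals $\lambda_{\sf num}(\bar{\bf f})$ and $\lambda_{\sf den}(\bar{\bf f})$ of $\bar{\bf f}$ without affecting stationarity; choosing them so that $\lambda(\bar{\bf f}) = \lambda_{\sf num}(\bar{\bf f})/\lambda_{\sf den}(\bar{\bf f})$ matches \eqref{eq:lambda_KKT} and then left-multiplying by ${\bf B}_{\sf KKT}^{-1}(\bar{\bf f})$ delivers \eqref{eq:KKT_condition}. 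The main obstacle I anticipate is purely bookkeeping: tracking signs through the three-level composition (quadratic form $\to$ ratio/exponent $\to$ LogSumExp), and in particular verifying that each cross-user leakage term $R_u^{(s)}$ with $u\in\CMcal{K}\setminus\{s\}$ contributes to \emph{both} ${\bf A}_{\sf KKT}$ (via $\bD_u^{(s)}$) and ${\bf B}_{\sf KKT}$ (via $\bC_u^{(s)}$) with the correct normalized weight $w_u^{(s)}/\bigl(\sum_i w_i^{(s)}\bigr)$, so that the two displayed matrices in the lemma statement are recovered \emph{exactly} and not merely up to ambiguous scalars.
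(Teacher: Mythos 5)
Your proposal is correct and follows essentially the same route as the paper's proof in Appendix A: the paper likewise drops the power constraint by scale invariance, splits the smoothed objective into the same four groups $L_1,\dots,L_4$ (common stream, secret private, leakage, normal private), applies the quadratic-ratio derivative identity and chain rule to obtain the softmax-weighted gradient terms, and then collects numerator/denominator contributions into ${\bf A}_{\sf KKT}$ and ${\bf B}_{\sf KKT}$ (with the sign flip on the leakage terms exactly as you describe) before inverting the Hermitian full-rank ${\bf B}_{\sf KKT}$. The only cosmetic difference is that you differentiate the log of each Rayleigh quotient directly while the paper states the derivative of the quotient itself, which is equivalent after the chain rule.
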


\begin{proof}
    See Appendix \ref{app:lemma1}.
\end{proof}

We treat \eqref{eq:KKT_condition} as an eigenvector-dependent nonlinear eigenvalue problem (NEPv) \cite{cai2018eigenvector}.
We remark that the  Lagrangian function  $L(\bar {\bf f})$ is  also reformulated as $L(\bar \bff) = \ln \lambda(\bar\bff)$ where $\lambda(\bar \bff)$ is defined in \eqref{eq:lambda_KKT}.
From this relationship, this problem is essentially about finding the principal eigenvector $\bar {\bf{f}}^\star$ that maximizes $\lambda(\bar {\bf{f}})$.
In order to tackle this, we adopt the GPI method \cite{choi2019joint}, leading to the development of the proposed algorithm.
The process, referred to as selective secure SE RSMA precoding based on GPI (SSSE-GPI-RSMA), is described in Algorithm~\ref{algorithm:1}.
It iteratively updates $\bar {\bf f}$ using simple projection and normalization until convergence.
We note that the feasibility can be guaranteed via normalization.

\begin{algorithm} [t]
    \caption{SSSE-GPI-RSMA} 
    \label{algorithm:1}
    {\bf{initialize}}: $\bar {\bf{f}}_{0}$\\
    Set the iteration count $t = 0$.\\
    \While {$\left\|\bar {\bf{f}}_{t+1} - \bar {\bf{f}}_{t} \right\| > \epsilon$ $\it{\&}$ $t \leq t_{\rm max}$}{
    Matrix construction $ {\bf{A}}_{\sf KKT} (\bar {\bf{f}}_{t})$ in \eqref{eq:A_KKT_condition} \\
    Matrix construction  $ {\bf{B}}_{\sf KKT} (\bar {\bf{f}}_{t})$ in \eqref{eq:B_KKT_condition} \\
    Projection $\bar {\bf{f}}_{t+1} = $ ${{\bf{B}}^{-1}_{\sf KKT} (\bar {\bf{f}}_{t}) {\bf{A}}_{\sf KKT} (\bar {\bf{f}}_{t}) \bar {\bf{f}}_{t}}$. \\
    Normalization $\bar {\bf{f}}_{t+1} = \bar {\bf{f}}_{t+1}/\|\bar {\bf{f}}_{t+1}\|.$
    \\
     $t \leftarrow t+1$.}
    \Return{\ }{$\bar{\bf f}_t$}.
\end{algorithm}

\section{Extension to Limited CSIT: Robust and Selective Secure Rate-Splitting  Precoding}

In this section, we introduce the consideration of imperfect and partial CSIT to broaden the applicability of our proposed algorithm towards more realistic scenarios.
Specifically, the AP is restricted to having limited access solely to the estimated channel for each user $k$, defined as
\begin{align} \label{eq:estimated_ch}
        \hat{\bf{h}}_k = {\bf{h}}_k - {\pmb{\phi}}_k,
\end{align}
where $\hat{\bf{h}}_k$ is the estimated channel vector of user $k$ and ${\pmb{\phi}}_k$ is the CSIT estimation error vector. 
Furthermore, the AP also has access to the spatial channel covariance matrix ${\bf{R}}_k = \mathbb{E}[{\bf{h}}_k {\bf{h}}_k^{\sf H}]$ and associated estimation error covariance matrix ${\bf{\Phi}}_k = \mathbb{E}[{\pmb{\phi}}_{k} {\pmb{\phi}}_k^{\sf H}]$.
Regarding the wiretap channels, the knowledge of AP is confined to the wiretap channel covariance matrix ${\bf{R}}_{\sf{e}} = \mathbb{E}[{\bf{g}}_{\sf{e}} {\bf{g}}_{\sf{e}}^{\sf H}]$, not the estimated channel ${\bf g}_{\sf e}$.

\subsection{Conditional Average Rate}
We assume imperfect and partial CSIT so as to consider more practical scenarios.
AP is able to access to the estimated channel of users with a known estimation error covariance and the wiretap channel covariance matrix.
Since only the estimated legitimate channel and other long-term channel statistics are available at the AP, we adopt a conditional averaged secrecy SE approach \cite{joudeh2016sum,salem2022secure} to convert the problem for exploiting the long-term channel statistics.
To this end, we define $\mathbb{E}\left[R_{{\sf{c}},k}({\bf{F}})\right]$, $\mathbb{E}\left[R_k({\bf{F}})\right]$, and $\mathbb{E}\left[{R}^{(s)}({\bf{F}})\right]$ as the ergodic SEs of the common stream, private stream of user $k$, and maximum leakage for the stream $s_s$, respectively, \cite{joudeh2016sum,salem2022secure} and replace $R_{{\sf{c}},k}({\bf{F}})$, $R_k({\bf{F}})$, and $R^{(s)}({\bf{F}})$ in \eqref{eq:SE_common}, \eqref{eq:SE_user}, and \eqref{eq:SE_leakage} with the corresponding ergodic SEs.

Then, the formulated the ergodic SE of the common stream $\bar{R}_{\sf{c}}$ is expressed as \cite{joudeh2016sum}
\begin{align}
    \nonumber
    \label{eq:SE_common_partial}
    &\bar{R}_{\sf c}({\bf{F}}) \\
    \nonumber
    &= \min_{k \in \CMcal{K}} \left\{ \mathbb{E}_{{\bf h}_k} \left[\log_2 \left(1 + \frac{|{\bf{h}}_{k}^{\sf H} {\bf{f}}_{\sf c}|^2}{\sum_{i \in \CMcal{K}} |{\bf{h}}_{k}^{\sf H} {\bf{f}}_i|^2 + \frac{\sigma^2}{P}} \right)   \right] \right\} \\
    \nonumber
    &\stackrel{(a)}= \min_{k \in \CMcal{K}} \left\{ \mathbb{E}_{\hat{\bf h}_k}\left[\mathbb{E}_{{\bf h}_k|{\hat{\bf h}}_k} \left[\log_2 \left. \left(1 + \frac{|{\bf{h}}_{k}^{\sf H} {\bf{f}}_{\sf c}|^2}{\sum_{i \in \CMcal{K}} |{\bf{h}}_{k}^{\sf H} {\bf{f}}_i|^2 + \frac{\sigma^2}{P}} \right) \right| \hat{{\bf{h}}}_k \right] \right] \right\} \\
    &= \min_{k \in \CMcal{K}}\{\bar{R}_{{\sf c},k}({\bf{F}})\},
\end{align}
where $(a)$ comes from the law of total expectation.
We also derive the ergodic SE  of private stream for user $k$ as \cite{joudeh2016sum}
\begin{align}
    \label{eq:SE_private_partial}
    \bar{R}_k({\bf{F}}) &= \mathbb{E}_{{\bf h}_k} \left[\log_2 \left(1 + \frac{|{\bf{h}}_{k}^{\sf H} {\bf{f}}_k|^2}{\sum_{i \in \CMcal{K}\setminus\{k\}} |{\bf{h}}_{k}^{\sf H} {\bf{f}}_i|^2 + \frac{\sigma^2}{P}} \right) \right] \\
    \nonumber
    &= \mathbb{E}_{\hat{\bf h}_k} \left[\mathbb{E}_{{\bf h}_k|{\hat{\bf h}}_k} \left[\log_2 \left. \left(1 + \frac{|{\bf{h}}_{k}^{\sf H} {\bf{f}}_k|^2}{\sum_{i \in \CMcal{K}\setminus\{k\}} |{\bf{h}}_{k}^{\sf H} {\bf{f}}_i|^2 + \frac{\sigma^2}{P}} \right) \right| \hat{{\bf{h}}}_k \right] \right].
\end{align}

Similarly, the ergodic maximum  leakage SE $\bar{R}^{(s)}$ \cite{salem2022secure} is  
\begin{align} \label{eq:SE_leakage_partial}
    \nonumber
    \bar{R}^{(s)}({\bf{F}}) &=\mathbb{E}_{{{\bf h}_u},{{\bf g}_{\sf{e}}}} \left[ R^{(s)}({\bf{F}})\right] 
    \\
    &= \mathbb{E}_{{{\bf h}_u},{{\bf g}_{\sf{e}}}} \left[\max_{l \in \CMcal{L}(s)}R_l^{(s)}({\bf{F}})\right] 
    \\
    &= \mathbb{E}_{{{\bf h}_u},{{\bf g}_{\sf{e}}}} \left[\max_{{\sf{e}} \in \CMcal{E}, u \in \CMcal{K} \setminus\{s\}} \left\{R_{\sf{e}}^{(s)}({\bf{F}}),R_u^{(s)}({\bf{F}})\right\} \right].
\end{align}

Using the derived ergodic SE in \eqref{eq:SE_common_partial}, \eqref{eq:SE_private_partial}, and \eqref{eq:SE_leakage_partial}, we now rewrite the optimization problem in \eqref{eq:op} considering the imperfect CSIT in the following manner:
\begin{align}
    \nonumber
    \mathop{{\text{maximize}}}_{{\bf{f}}_{\sf c}, {\bf{f}}_1, \cdots,{\bf{f}}_K}& \;\; \bar{R}_{\sf c}({\bf{F}}) + \sum_{s\in {\CMcal{S}}} \left[\bar{R}_{s}({\bf{F}}) - \bar{R}^{(s)}({\bf{F}})\right]^+ + \sum_{m \in {\CMcal{M}}} \bar{R}_m({\bf{F}}) \\
    {\text{subject to}} & \;\; \left\| {\bf{f}}_{\sf c} \right\|^2 + \sum_{k = 1}^{K} \left\| {\bf{f}}_k \right\|^2  \le 1 .
    \label{eq:op_partial}
\end{align}


\subsection{Limited CSIT Handling Appraoch}
As the optimization problem in \eqref{eq:op_partial} cannot be directly solved with the limited information available, we instead establish the lower bounds for each SE.
The lower bound of the ergodic SE of the common stream  $\bar{R}_{{\sf{c}},k}({\bf F})$ in \eqref{eq:SE_common_partial} is derived as follows:
\begin{align} \label{eq:SE_common_partial_lb}
    \nonumber
    &\bar{R}_{{\sf c},k}({\bf F}) \\
    \nonumber
    &\stackrel{(b)}\geq \!\! \mathbb{E}_{\hat{\bf h}_k}\! \left[\log_2 \left(1\! +\! \frac{|\hat{{\bf{h}}}_{k}^{\sf H} {\bf{f}}_{\sf c}|^2}{\sum_{i \in \CMcal{K}} |\hat{{\bf{h}}}_{k}^{\sf H} {\bf{f}}_i|^2 \!+ \!{\bf{f}}^{\sf{H}}_{\sf{c}}{\bf{\Phi}}_k{\bf{f}}_{\sf{c}} \!+ \!\sum_{i \in \CMcal{K}} {\bf{f}}^{\sf{H}}_i{\bf{\Phi}}_k{\bf{f}}_i \!+ \!\frac{\sigma^2}{P}} \right) \right] \\
    &= \mathbb{E}_{\hat{\bf h}_k} \left[R^{\sf{lb}}_{{\sf c},k}({\bf F};\hat{\bf h}_k, {\bf \Phi}_k)\right],
\end{align}
where $(b)$ is obtained by assuming that all error terms represent independent Gaussian noise, according to the principles outlined in the generalized mutual information framework \cite{yoo2006capacity}.
Subsequently, the derivation uses Jensen's inequality to reach its conclusion.
Then, based on \eqref{eq:SE_common_partial} and \eqref{eq:SE_common_partial_lb}, we further have
\begin{align}
    \label{eq:SE_common_partial_lb_v2}
    \nonumber
    \bar{R}_{\sf{c}}(\bf F) &= \min_{k \in \CMcal{K}} \left\{\bar{R}_{{\sf c},k}(\bf F) \right\} 
    \\
    &\geq \min_{k \in \CMcal{K}} \left\{\mathbb{E}_{\hat{\bf h}_k} \left[R^{\sf{lb}}_{{\sf c},k}({\bf F};\hat{\bf h}_k, {\bf \Phi}_k)\right] \right\}
    \\
    & \geq \mathbb{E}_{\hat{\bf h}_k} \left[\min_{k \in \CMcal{K}} \left\{R^{\sf{lb}}_{{\sf c},k}({\bf F};\hat{\bf h}_k, {\bf \Phi}_k) \right\} \right].
\end{align}

Likewise, we can derive the lower bound of the ergodic SE of the private stream $\bar{R}_k({\bf F})$ in \eqref{eq:SE_private_partial} as follows:
\begin{align} \label{eq:SE_private_partial_lb}
    \nonumber
    \bar{R}_k({\bf F}) 
    &\!\stackrel{(c)}\geq \!\mathbb{E}_{\hat{\bf h}_k}  \!\left[\log_2 \left(1 \! +  \!\frac{|\hat{{\bf{h}}}_{k}^{\sf H} {\bf{f}}_k|^2}{\sum_{i \in \CMcal{K}\setminus\{k\}} |\hat{{\bf{h}}}_{k}^{\sf H} {\bf{f}}_i|^2  \!+ \! \sum_{i \in \CMcal{K}} {\bf{f}}^{\sf{H}}_i{\bf{\Phi}}_k{\bf{f}}_i \! +  \!\frac{\sigma^2}{P}} \right) \right]\\
    &= \mathbb{E}_{\hat{\bf h}_k} \left[ R^{\sf{lb}}_k({\bf F};\hat{\bf h}_k, {\bf \Phi}_k)\right],
\end{align}
where $(c)$ stems from Jensen's inequality.

Now, to derive the ergodic maiximum leakage SE  
$\bar{R}^{(s)}({\bf{F}})$ \eqref{eq:SE_leakage_partial} in a tractable form,
we introduce the following lemma:
\begin{lemma}
    \label{sum_approx}
    Let random variables ($X_i, Y_i$) be independent with ($X_j, Y_j$)  for $i \neq j$.
    Then the following approximation holds with large $N$:
    \begin{align}
        \label{lemma_approx}
        \mathbb{E} \left[\log_2 \left(1 + \sum_{i=1}^N\frac{X_i}{Y_i}\right)\right] \approx \log_2 \left(1 + \sum_{i=1}^N \frac{\mathbb{E} \left[X_i\right]}{\mathbb{E} \left[Y_i\right]}\right).
    \end{align}
\end{lemma}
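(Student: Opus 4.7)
The plan is to combine two classical approximations that are each tight in the regime where a random sum concentrates around its mean: first, push the expectation inside the logarithm; second, within each summand, push the expectation through the ratio. The large-$N$ hypothesis is what drives both approximations, since summing $N$ independent terms forces the coefficient of variation of the argument of $\log_2$ to zero.

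First I would set $Z_i = X_i/Y_i$ and observe that the hypothesis that $(X_i,Y_i)$ is independent of $(X_j,Y_j)$ for $i\ne j$ makes $\{Z_i\}_{i=1}^{N}$ mutually independent. Because $\log_2(1+x)$ is concave, Jensen's inequality yields
\begin{equation}
\mathbb{E}\!\left[\log_2\!\Big(1+\sum_{i=1}^N Z_i\Big)\right] \le \log_2\!\Big(1+\sum_{i=1}^N \mathbb{E}[Z_i]\Big).
\end{equation}
A second-order Taylor expansion of $\log_2(1+x)$ about $x_0=\sum_i\mathbb{E}[Z_i]$ shows that the Jensen gap is of order $\mathrm{Var}(\sum_i Z_i)/(1+\sum_i\mathbb{E}[Z_i])^2$. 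By independence, $\mathrm{Var}(\sum_i Z_i)=\sum_i\mathrm{Var}(Z_i)=O(N)$, whereas the denominator scales as $\Theta(N^2)$, so the gap is $O(1/N)$ and vanishes as $N$ grows; this justifies moving the expectation inside the logarithm up to vanishing error.

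Next I would replace $\mathbb{E}[X_i/Y_i]$ by $\mathbb{E}[X_i]/\mathbb{E}[Y_i]$ via a first-order delta method. Expanding $f(x,y)=x/y$ about $(\mathbb{E}[X_i],\mathbb{E}[Y_i])$ and taking expectations gives
\begin{equation}
\mathbb{E}\!\left[\frac{X_i}{Y_i}\right] = \frac{\mathbb{E}[X_i]}{\mathbb{E}[Y_i]} + O\!\left(\frac{\mathrm{Var}(Y_i)}{\mathbb{E}[Y_i]^2}+\frac{|\mathrm{Cov}(X_i,Y_i)|}{\mathbb{E}[X_i]\mathbb{E}[Y_i]}\right),
\end{equation}
with the correction controlled by the coefficient of variation of $Y_i$. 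Substituting this into the Jensen-tight expression yields the claimed formula.

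The main obstacle is making both approximation layers valid simultaneously: one needs a uniform lower bound $\inf_i\mathbb{E}[Y_i]>0$ together with uniformly bounded second moments of $X_i$ and $Y_i$, under which both the Jensen gap and the delta-method residual decay as $N\to\infty$. In the wireless setting at hand these conditions are inherited from the Gaussianity of the channels and the boundedness of the precoder, so the approximation is sharp in practice; the remaining technical bookkeeping is to verify the moment bounds uniformly in the conditioning estimate $\hat{\mathbf h}_k$, which is where the delicate part of a fully rigorous argument would lie.
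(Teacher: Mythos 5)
Your proposal reaches the stated formula by a genuinely different route from the paper, and the difference is substantive. The paper's proof (Appendix~B) first puts the sum of ratios over a common denominator, e.g.\ for $N=2$, $\frac{X_1}{Y_1}+\frac{X_2}{Y_2}=\frac{X_1Y_2+X_2Y_1}{Y_1Y_2}$, applies the approximation $\mathbb{E}\left[\log_2\left(1+A/B\right)\right]\approx\log_2\left(1+\mathbb{E}[A]/\mathbb{E}[B]\right)$ (Lemma~1 of the cited reference) \emph{once} to the combined numerator and denominator, and then uses the cross-independence of the pairs to factor the expectations exactly: $\mathbb{E}[X_1Y_2+X_2Y_1]/\mathbb{E}[Y_1Y_2]=\mathbb{E}[X_1]/\mathbb{E}[Y_1]+\mathbb{E}[X_2]/\mathbb{E}[Y_2]$. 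Crucially, the paper never needs the per-term statement $\mathbb{E}[X_i/Y_i]\approx\mathbb{E}[X_i]/\mathbb{E}[Y_i]$. Your argument does: after the Jensen step you are left with $\log_2\left(1+\sum_i\mathbb{E}[X_i/Y_i]\right)$, and you close the remaining gap with a first-order delta method on each ratio. That step requires $\mathrm{Var}(Y_i)/\mathbb{E}[Y_i]^2\to 0$, i.e.\ each $Y_i$ must individually concentrate. Under the conditions you actually state (uniformly bounded second moments and $\inf_i\mathbb{E}[Y_i]>0$) the per-term relative error is only $O(1)$ and does not shrink with $N$, so the resulting additive error in the logarithm is $\Theta(1)$ rather than vanishing; and in the paper's application the concentration hypothesis genuinely fails, since each $Y_i$ is a finite sum of terms such as $|{\bf g}_{\sf e}^{\sf H}{\bf f}_j|^2$ driven by a single Gaussian channel realization, whose coefficient of variation is bounded away from zero. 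The delta-method layer is therefore the weak link, and it is precisely what the paper's common-denominator trick is designed to avoid. On the positive side, your Jensen-plus-variance analysis of the outer logarithm is the only part of either argument that actually uses the ``large $N$'' hypothesis, and the paper leaves that entirely implicit; if you keep that analysis but replace the delta-method step by the exact factorization under cross-independence, you recover a version of the argument at least as strong as the one in the appendix.
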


\begin{proof}
    See Appendix~\ref{app:lemma2}.
\end{proof}

Then, to properly derive a useful bound for $\bar{R}^{(s)}({\bf{F}})$ \eqref{eq:SE_leakage_partial},  we consider the worst-case scenario for secure communication in which all the other users and eavesdroppers engage in full collusion. 
In this case, the wiretap SE is upper bounded as \cite{yang2016physical, choi2021sum}
\begin{align}
    \nonumber
    \bar{R}^{(s)}({\bf{F}}) 
    &\leq \mathbb{E} \left[\log_2 \left(1 + \sum_{{\sf{e}} \in \CMcal{E}} \frac{|{\bf{g}}_{\sf{e}}^{\sf H} {\bf{f}}_{s}|^2}{|{\bf{g}}_{\sf{e}}^{\sf H} {\bf{f}}_{\sf{c}}|^2 + \sum_{i \in \CMcal{K}\setminus\{s\}} |{\bf{g}}_{\sf{e}}^{\sf H} {\bf{f}}_i|^2 + \frac{\sigma_{\sf{e}}^2}{P}} \right. \right. 
    \\
    &\quad + \left. \left. \sum_{u \in \CMcal{K}\setminus\!\{s\}}\frac{|{\bf{h}}_{u}^{\sf H} {\bf{f}}_{s}|^2}{\sum_{i \in \CMcal{K}\setminus\{u,s\}} |{\bf{h}}_{u}^{\sf H} {\bf{f}}_i|^2 + \frac{\sigma^2}{P}}\right) \right]
    \\\nonumber
     &\stackrel{(d)}\approx  \log_2 \left(1 + \sum_{{\sf{e}} \in \CMcal{E}} \frac{{\bf{f}}^{\sf{H}}_{s}{\bf{R}}_{\sf{e}}{\bf{f}}_{s}}{{\bf{f}}^{\sf{H}}_{\sf{c}}{\bf{R}}_{\sf{e}}{\bf{f}}_{\sf{c}} + \sum_{i \in \CMcal{K}\setminus\{s\}} {\bf{f}}^{\sf{H}}_{i}{\bf{R}}_{\sf{e}}{\bf{f}}_{i} + \frac{\sigma_{\sf{e}}^2}{P}} \right. 
    \\
    &\quad + \left. \sum_{u \in \CMcal{K}\setminus\!\{s\}}\frac{{\bf{f}}^{\sf{H}}_{s}{\bf{R}}_u{\bf{f}}_{s}}{\sum_{i \in \CMcal{K}\setminus\{u,s\}} {\bf{f}}^{\sf{H}}_{i}{\bf{R}}_u{\bf{f}}_{i} + \frac{\sigma^2}{P}}\right) 
    \\\label{eq:Rs_ub}
    &= R^{(s){\sf{ub}}}({\bf F};{\bf R}_{\sf{e}},{\bf R}_u),
\end{align}
where $(d)$ is derived from  Lemma \ref{sum_approx}.
We note that \eqref{eq:Rs_ub} includes the channel covariance matrices of users and eavesdroppers which are available at the AP and thus, leveraging \eqref{eq:Rs_ub} allows us to adequately incorporate the leakage SE in the optimization with the limited CSIT.

Finally, using \eqref{eq:SE_common_partial_lb_v2}, \eqref{eq:SE_private_partial_lb}, and \eqref{eq:Rs_ub}  we can derive the lower bound of the ergodic secrecy sum SE in \eqref{eq:op_partial} as follows:
\begin{align}  
    \nonumber
     &\bar{R}_{\sf{c}}({\bf F}) + \sum_{s\in {\CMcal{S}}} \left[\bar{R}_{s}({\bf F}) - \bar{R}^{(s)}({\bf{F}}) \right]^+ + \sum_{m \in {\CMcal{M}}} \bar{R}_m({\bf F}) \\
    \nonumber
    & \geq \bbE_{\hat{\bf h}_k}\left[\min_{k \in \CMcal{K}} \left\{ R^{\sf{lb}}_{{\sf{c}},k}({\bf F};\hat{\bf h}_k, {\bf \Phi}_k)\right\} \right]+ \sum_{s\in {\CMcal{S}}} \left[\bbE_{\hat{\bf h}_k}\left[R^{\sf{lb}}_s({\bf F};\hat{\bf h}_s, {\bf \Phi}_s) \right]\right.  
    \\\label{eq:ergodic_SE_lb}
    & \quad  \left. - R^{(s){\sf{ub}}}({\bf F};{\bf R}_{\sf{e}},{\bf R}_u) \right]^+ \! + \! \bbE_{\hat{\bf h}_k}\left[ \sum_{m \in {\CMcal{M}}} R^{\sf{lb}}_m ({\bf F};\hat{\bf h}_m, {\bf \Phi}_m) \right].
\end{align}
We note that since the estimated channel information $\hat {\bf h}_k$ is available at the AP, we can use \eqref{eq:ergodic_SE_lb} without the expectation over $\hat {\bf h}_k$ as a new objective function. 
Then, relaxing the non-zero operator $[\cdot]^+$, we reformulate the lower bound of the optimization problem as follows:
\begin{align}
    \nonumber
    \mathop{{\text{maximize}}}_{{\bf{f}}_{\sf c}, {\bf{f}}_1, \cdots,{\bf{f}}_K}& \;\; \min_{k \in \CMcal{K}} \left\{ R^{\sf{lb}}_{{\sf{c}},k}({\bf F};\hat{\bf h}_k, {\bf \Phi}_k) \right\} + \sum_{s\in {\CMcal{S}}} \left(R^{\sf{lb}}_s({\bf F};\hat{\bf h}_s, {\bf \Phi}_s) \right.
    \\\label{eq:problem_imperfect_lb}
    &  \left.- R^{(s){\sf{ub}}}({\bf F};{\bf R}_{\sf{e}},{\bf R}_u)  \right)+ \sum_{m \in {\CMcal{M}}} R^{\sf{lb}}_m ({\bf F};\hat{\bf h}_m, {\bf \Phi}_m) 
    \\
    {\text{subject to}} & \;\;  \left\| {\bf{f}}_{\sf c} \right\|^2 + \sum_{k = 1}^{K} \left\| {\bf{f}}_k \right\|^2  \le 1.
    \label{reformulatedop}
\end{align}
In the following subsection, we use the same approach as the perfect CSIT case to identify the superior local optimal solution of the reformulated problem in \eqref{eq:problem_imperfect_lb}.

\subsection{Problem Reformulation}
Assuming $\|\bar {\bf f}\|^2=1$ where $\bar {\bf f} = {\rm vec}({\bf F})$, we first cast $ R^{\sf{lb}}_{{\sf{c}},k}({\bf F})$ and $ R^{\sf{lb}}_k({\bf F})$ as
\begin{align} \label{eq:Rayleigh common and private_partial}
    R^{\sf{lb}}_{{\sf{c}},k}({\bf{\bar{f}}}) = \log_2 \left( \frac{\bar{\bf{f}}^{\sf{H}} {\bf{A}}_{{\sf{c}},k}^{\sf{lb}}\bar{\bf{f}}}{\bar{\bf{f}}^{\sf{H}} {\bf{B}}_{{\sf{c}},k}^{\sf{lb}}\bar{\bf{f}}} \right),
    R^{\sf{lb}}_k({\bf{\bar{f}}}) = \log_2 \left( \frac{\bar{\bf{f}}^{\sf{H}} {\bf{A}}_{k}^{\sf{lb}}\bar{\bf{f}}}{\bar{\bf{f}}^{\sf{H}}{\bf{B}}_{k}^{\sf{lb}}\bar{\bf{f}}} \right),
\end{align}
where
\begin{align}
    &{\bf{A}}_{{\sf{c}},k}^{\sf{lb}} = {\bf{I}}_{K+1} \otimes \left(\hat{{\bh}}_{k} \hat{{\bf{h}}}_k^{\sf H} + {\bf{\Phi}}_k \right) + {\bf{I}}_{N(K+1)}\frac{\sigma^2}{P},\\
    &{\bf{B}}_{{\sf{c}},k}^{\sf{lb}} = {\bf{A}}_{{\sf{c}},k}^{\sf{lb}} - {\rm{diag}}\left\{{\bf{e}}_{1}\right\} \otimes \hat{{\bh}}_{k} \hat{{\bf{h}}}_k^{\sf H},\\
    &{\bf{A}}_{k}^{\sf{lb}} = \left({\bf{I}}_{K+1} - {\rm{diag}}\left\{{\bf{e}}_{1}\right\}\right) \otimes \left(\hat{{\bf{h}}}_{k} \hat{{\bf{h}}}_k^{\sf H} + {\bf{\Phi}}_k\right) + {\bf{I}}_{N(K+1)}\frac{\sigma^2}{P},\\
    &{\bf{B}}_{k}^{\sf{lb}} = {\bf{A}}_{k}^{\sf{lb}} - {\rm{diag}}\left\{{\bf{e}}_{k+1}\right\} \otimes \hat{{\bf{h}}}_{k} \hat{{\bf{h}}}_k^{\sf H}.
\end{align}
Similarly, we reformulate the average leakage SE as
\begin{align} \label{eq:Rayleigh leakage_partial}
    R^{(s){\sf{ub}}}({\bf{\bar{f}}}) \!= \!\log_2 \left( 1 \!+ \!\sum_{{\sf{e}} \in \CMcal{E}}\frac{\bar{\bf{f}}^{\sf{H}} {\bf{A}}^{(s){\sf{ub}}}\bar{\bf{f}}}{\bar{\bf{f}}^{\sf{H}} {\bf{B}}^{(s){\sf{ub}}}\bar{\bf{f}}} + \!\!\!\!\sum_{u \in \CMcal{K}\setminus\{s\}} \!\!\frac{\bar{\bf{f}}^{\sf{H}} {\bf{C}}^{(s){\sf{ub}}}\bar{\bf{f}}}{\bar{\bf{f}}^{\sf{H}} {\bf{D}}^{(s){\sf{ub}}}\bar{\bf{f}}}\right),
\end{align}
where
\begin{align}
    &{\bf{A}}^{(s){\sf{ub}}} = {\rm{diag}}\left\{{\bf{e}}_{s+1}\right\} \otimes {\bR}_{\sf{e}}, \\
    &{\bf{B}}^{(s){\sf{ub}}} = {\bf{I}}_{K+1} \otimes {\bR}_{\sf{e}} - {\bf{A}}^{(s){\sf{ub}}} + {\bf{I}}_{N(K+1)}\frac{\sigma_{\sf{e}}^2}{P}, \\
    &{\bf{C}}^{(s){\sf{ub}}} = {\rm{diag}}\left\{{\bf{e}}_{s+1}\right\} \otimes {\bR}_u, \\
    &{\bf{D}}^{(s){\sf{ub}}} = \left({\bf{I}}_{K+1} - {\rm{diag}}\left\{{\bf{e}}_{1} + {\bf{e}}_{u+1}\right\}\right) \otimes {\bR}_u - {\bf{C}}^{(s){\sf{ub}}} + {\bf{I}}_{N(K+1)}\frac{\sigma^2}{P}.
\end{align}

Then, the rest of the steps are the same as the perfect CSIT case: we approximate the maximum operator in \eqref{eq:problem_imperfect_lb} with the LogSumExp approach and further derive the first-order KKT condition of the problem as in Lemma~\ref{lem:KKT_lb}. 
\begin{lemma}
    \label{lem:KKT_lb}
    With limited CSIT, the first-order optimality condition of problem \eqref{reformulatedop} without  the power constraint is satisfied if the following condition holds:
    \begin{align} \label{eq:KKT_condition_lb}
        ({\bf{B}}_{\sf{KKT}}^{\sf lb})^{-1}(\bar{{\bf{f}}}){\bf{A}}^{\sf lb}_{\sf{KKT}}(\bar{{\bf{f}}})\bar{{\bf{f}}} = {\lambda}^{\sf lb}(\bar{{\bf{f}}})\bar{{\bf{f}}},
    \end{align}
    where
    \begin{align} \label{eq:lambda_KKT_2}
        &\lambda^{\sf lb}\left(\bar{\bf{f}}\right) = \left[\sum_{k \in \CMcal{K}} w_{{\sf{c}},k}^{\sf{lb}}(\bar{\bf{f}})\right]^{-\alpha} \left[\prod_{s \in \CMcal{S}}\left( \frac{\bar{\bf{f}}^{\sf{H}} {\bf{A}}_{s}^{\sf{lb}} \bar{\bf{f}}}{\bar{\bf{f}}^{\sf{H}} {\bf{B}}_{s}^{\sf{lb}}\bar{\bf{f}}} \right)^{\frac{1}{\ln2}} - \prod_{s \in \CMcal{S}}\left(1 + \right. \right. \\
        \nonumber
        & \left. \left. \sum_{{\sf{e}} \in \CMcal{E}} w_{{\sf{e}}}^{(s){\sf{ub}}}(\bar{\bf{f}}) + \sum_{u \in \CMcal{K}\setminus\{s\}} w_{u}^{(s){\sf{ub}}}(\bar{\bf{f}})\right)^{\frac{1}{\ln2}} \right] \prod_{m \in \CMcal{M}} \left(\frac{\bar{\bf{f}}^{\sf{H}} {\bf{A}}_{m}^{\sf{lb}} \bar{\bf{f}}}{\bar{\bf{f}}^{\sf{H}} {\bf{B}}_{m}^{\sf{lb}}\bar{\bf{f}}}\right)^{\frac{1}{\ln2}},
    \end{align}
    \begin{align} \label{eq:A_KKT_condition_lb}
        \nonumber
        &{\bf{A}}^{\sf lb}_{\sf{KKT}}(\bar{{\bf{f}}}) = \lambda^{\sf lb}_{\sf{num}}(\bar{\bf{f}}) \times  \left[ \sum_{k \in \CMcal{K}}\left(\frac{w_{{\sf{c}},k}^{\sf{lb}}(\bar{\bf{f}})}{\sum_{l \in \CMcal{K}} w_{{\sf{c}},l}^{\sf{lb}}(\bar{\bf{f}})}\frac{{{\bf{A}}_{{\sf{c}},k}^{\sf{lb}}}}{{\bar{{\bf{f}}}}^{\sf{H}}{\bf{A}}_{{\sf{c}},k}^{\sf{lb}}{\bar{{\bf{f}}}}}\right) \right. \\
        \nonumber
        & \left. + \sum_{s\in {\CMcal{S}}} \frac{{\bf{A}}_s^{\sf{lb}}}{{\bar{{\bf{f}}}}^{\sf{H}}{\bf{A}}_s^{\sf{lb}}{\bar{{\bf{f}}}}} + \sum_{s \in \CMcal{S}} \left\{\left(\!\!\frac{\sum_{{\sf{e}} \in \CMcal{E}}\!\left(\!w_{\sf{e}}^{(s){\sf{lb}}}(\bar{\bf{f}})\!\left(\!\frac{{\bf{B}}_{\sf{e}}^{(s){\sf{ub}}}}{{\bar{{\bf{f}}}}^{\sf{H}}{\bf{B}}_{\sf{e}}^{(s){\sf{ub}}}{\bar{{\bf{f}}}}}\right)\right)}{1 + \sum_{i \in \CMcal{E}}w_i^{(s){\sf{ub}}}(\bar{\bf{f}}) + \sum_{i \in \CMcal{K}\setminus\{s\}}w_{i}^{(s){\sf{ub}}}(\bar{\bf{f}})}\!\!\right) \right. \right. \\
        & \left. \left. + \left(\!\!\frac{\sum_{u \in \CMcal{K}\setminus\{s\}}\left(w_{u}^{(s){\sf{ub}}}(\bar{\bf{f}})\!\left(\!\frac{{\bf{D}}_u^{(s){\sf{ub}}}}{{\bar{{\bf{f}}}}^{\sf{H}}{\bf{D}}_u^{(s){\sf{ub}}}{\bar{{\bf{f}}}}}\right)\right)}{1 + \sum_{i \in \CMcal{E}}w_i^{(s){\sf{ub}}}(\bar{\bf{f}}) + \sum_{i \in \CMcal{K}\setminus\{s\}}w_{i}^{(s){\sf{ub}}}(\bar{\bf{f}})}\!\!\right) \right\} + \sum_{m\in {\CMcal{M}}} \frac{{\bf{A}}_m^{\sf{lb}}}{{\bar{{\bf{f}}}}^{\sf{H}}{\bf{A}}_m^{\sf{lb}}{\bar{{\bf{f}}}}} \right].
    \end{align}
    and
    \begin{align} \label{eq:B_KKT_condition_lb}
        \nonumber
        &{\bf{B}}^{\sf lb}_{\sf{KKT}}(\bar{{\bf{f}}}) = \lambda^{\sf lb}_{\sf{den}}(\bar{\bf{f}}) \times  \left[ \sum_{k \in \CMcal{K}}\left(\frac{w_{{\sf{c}},k}^{\sf{lb}}(\bar{\bf{f}})}{\sum_{l \in \CMcal{K}} w_{{\sf{c}},l}^{\sf{lb}}(\bar{\bf{f}})}\frac{{{\bf{B}}_{{\sf{c}},k}^{\sf{lb}}}}{{\bar{{\bf{f}}}}^{\sf{H}}{\bf{B}}_{{\sf{c}},k}^{\sf{lb}}{\bar{{\bf{f}}}}}\right) \right. \\
        \nonumber
        & \left. + \sum_{s\in {\CMcal{S}}} \frac{{\bf{B}}_s^{\sf{lb}}}{{\bar{{\bf{f}}}}^{\sf{H}}{\bf{B}}_s^{\sf{lb}}{\bar{{\bf{f}}}}} + \sum_{s \in \CMcal{S}} \left\{\left(\!\!\frac{\sum_{{\sf{e}} \in \CMcal{E}}\!\left(\!w_{\sf{e}}^{(s){\sf{lb}}}(\bar{\bf{f}})\!\left(\!\frac{{\bf{A}}_{\sf{e}}^{(s){\sf{ub}}}}{{\bar{{\bf{f}}}}^{\sf{H}}{\bf{A}}_{\sf{e}}^{(s){\sf{ub}}}{\bar{{\bf{f}}}}}\right)\right)}{1 + \sum_{i \in \CMcal{E}}w_i^{(s){\sf{ub}}}(\bar{\bf{f}}) + \sum_{i \in \CMcal{K}\setminus\{s\}}w_{i}^{(s){\sf{ub}}}(\bar{\bf{f}})}\!\!\right) \right. \right. \\
        & \left. \left. + \left(\!\!\frac{\sum_{u \in \CMcal{K}\setminus\{s\}}\left(w_{u}^{(s){\sf{ub}}}(\bar{\bf{f}})\!\left(\!\frac{{\bf{C}}_u^{(s){\sf{ub}}}}{{\bar{{\bf{f}}}}^{\sf{H}}{\bf{C}}_u^{(s){\sf{ub}}}{\bar{{\bf{f}}}}}\right)\right)}{1 + \sum_{i \in \CMcal{E}}w_i^{(s){\sf{ub}}}(\bar{\bf{f}}) + \sum_{i \in \CMcal{K}\setminus\{s\}}w_{i}^{(s){\sf{ub}}}(\bar{\bf{f}})}\!\!\right) \right\} + \sum_{m\in {\CMcal{M}}} \frac{{\bf{B}}_m^{\sf{lb}}}{{\bar{{\bf{f}}}}^{\sf{H}}{\bf{B}}_m^{\sf{lb}}{\bar{{\bf{f}}}}} \right].
    \end{align}
    with $w_{{\sf{c}},k}^{\sf{lb}}(\bar{\bf{f}}) = \left(\frac{\bar{\bf{f}}^{\sf{H}} {\bf{A}}_{{\sf{c}},k}^{\sf{lb}}\bar{\bf{f}}}{\bar{\bf{f}}^{\sf{H}} {\bf{B}}_{{\sf{c}},k}^{\sf{lb}}\bar{\bf{f}}}\right)^{-\frac{1}{\alpha\ln{2}}}$,
    $w_{{\sf{e}}}^{(s){\sf{ub}}}(\bar{\bf{f}}) = \frac{\bar{\bf{f}}^{\sf{H}} {\bf{A}}^{(s){\sf{ub}}} \bar{\bf{f}}}{\bar{\bf{f}}^{\sf{H}} {\bf{B}}^{(s){\sf{ub}}}\bar{\bf{f}}}$, and
    $w_{u}^{(s){\sf{ub}}}(\bar{\bf{f}}) = \frac{\bar{\bf{f}}^{\sf{H}} {\bf{C}}^{(s){\sf{ub}}} \bar{\bf{f}}}{\bar{\bf{f}}^{\sf{H}} {\bf{D}}^{(s){\sf{ub}}}\bar{\bf{f}}}$.
    Here, $\lambda^{\sf lb}_{\sf num}(\bar \bff)$ and $\lambda^{\sf lb}_{\sf den}(\bar \bff)$ can be any functions of $\bar \bff$ which satisfies $\lambda^{\sf lb}(\bar \bff) = {\lambda^{\sf lb}_{\sf{num}}\left(\bar{\bf{f}}\right)}/{\lambda^{\sf lb}_{\sf{den}}\left(\bar{\bf{f}}\right)}$.
\end{lemma}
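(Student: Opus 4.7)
The plan is to mirror the proof structure of Lemma~\ref{lem:KKT} while accounting for the structural differences introduced by the limited-CSIT lower bound in \eqref{eq:problem_imperfect_lb}. First, I would substitute the Rayleigh-quotient forms \eqref{eq:Rayleigh common and private_partial} and \eqref{eq:Rayleigh leakage_partial} into the objective, apply the LogSumExp smoothing in \eqref{eq:logsumexp_min}--\eqref{eq:logsumexp_max} to the $\min$ over $k$ in the common-stream term, and then exponentiate to convert the smoothed objective into a multiplicative form. Defining $L^{\sf lb}(\bar{\bf f}) = \ln \lambda^{\sf lb}(\bar{\bf f})$, one recognizes \eqref{eq:lambda_KKT_2} directly as the product of the common-stream weighted sum raised to $-\alpha$, the private-stream Rayleigh quotients for secret users, the corresponding leakage bounds (with the ``$1+$'' kernel arising from the single-logarithm upper bound \eqref{eq:Rs_ub}), and the private-stream quotients for normal users. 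Maximizing the original smoothed objective is then equivalent to maximizing $\lambda^{\sf lb}(\bar{\bf f})$.

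Next I would take the Wirtinger derivative $\partial L^{\sf lb}/\partial \bar{\bf f}^{\sf H}$. Because $L^{\sf lb}$ is scale-invariant in $\bar{\bf f}$, the transmit-power constraint \eqref{reformulatedop} can be dropped and enforced only by a final normalization. Using $\partial(\bar{\bf f}^{\sf H}{\bf M}\bar{\bf f})/\partial \bar{\bf f}^{\sf H} = {\bf M}\bar{\bf f}$ together with the chain rule for each logarithmic factor, the derivative splits into two aggregate sums: a numerator aggregate ${\bf A}^{\sf lb}_{\sf KKT}(\bar{\bf f})$ that collects all ${\bf A}$-type matrices (${\bf A}_{{\sf c},k}^{\sf lb}$, ${\bf A}_{s}^{\sf lb}$, ${\bf A}_{m}^{\sf lb}$) along with the leakage \emph{denominator} matrices ${\bf B}_{\sf e}^{(s){\sf ub}}$ and ${\bf D}_{u}^{(s){\sf ub}}$, and a denominator aggregate ${\bf B}^{\sf lb}_{\sf KKT}(\bar{\bf f})$ that collects the ${\bf B}$-type matrices together with the leakage \emph{numerator} matrices ${\bf A}_{\sf e}^{(s){\sf ub}}$ and ${\bf C}_{u}^{(s){\sf ub}}$. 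The weights $w_{{\sf c},k}^{\sf lb}$ arise from differentiating the LogSumExp factor and therefore carry the exponent $-1/(\alpha\ln 2)$, whereas $w_{\sf e}^{(s){\sf ub}}$ and $w_u^{(s){\sf ub}}$ appear as linear ratios because the leakage upper bound \eqref{eq:Rs_ub} is a single logarithm rather than a LogSumExp. Setting the gradient to zero yields ${\bf A}^{\sf lb}_{\sf KKT}(\bar{\bf f})\,\bar{\bf f} = \mu(\bar{\bf f})\,{\bf B}^{\sf lb}_{\sf KKT}(\bar{\bf f})\,\bar{\bf f}$, and multiplying through by $({\bf B}^{\sf lb}_{\sf KKT})^{-1}(\bar{\bf f})$ produces \eqref{eq:KKT_condition_lb}; the freedom in the splitting $\lambda^{\sf lb} = \lambda^{\sf lb}_{\sf num}/\lambda^{\sf lb}_{\sf den}$ is exactly the scalar freedom that makes the numerator/denominator prefactors in \eqref{eq:A_KKT_condition_lb}--\eqref{eq:B_KKT_condition_lb} well defined. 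Invertibility of ${\bf B}^{\sf lb}_{\sf KKT}(\bar{\bf f})$ follows from the noise-regularization identity contributions $(\sigma^2/P){\bf I}_{N(K+1)}$ and $(\sigma_{\sf e}^2/P){\bf I}_{N(K+1)}$ present in every constituent matrix.

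The main obstacle will be bookkeeping the three distinct contributions to the gradient (common, private for secret users with its leakage upper bound, and private for normal users) and tracking which terms land in the numerator versus the denominator after differentiation. The nontrivial step is handling the leakage term \eqref{eq:Rayleigh leakage_partial}, whose argument $1 + \sum_{\sf e} w_{\sf e}^{(s){\sf ub}}(\bar{\bf f}) + \sum_{u} w_{u}^{(s){\sf ub}}(\bar{\bf f})$ is not a single Rayleigh quotient; differentiating the outer logarithm produces the shared normalizer $\bigl(1 + \sum_{i\in\CMcal{E}}w_i^{(s){\sf ub}}(\bar{\bf f}) + \sum_{i\in\CMcal{K}\setminus\{s\}}w_i^{(s){\sf ub}}(\bar{\bf f})\bigr)$ visible in \eqref{eq:A_KKT_condition_lb} and \eqref{eq:B_KKT_condition_lb}, and this normalizer must be carried through consistently alongside the per-term contributions from each Rayleigh quotient in the sum. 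Once these cross-terms are correctly accumulated, the remaining manipulations are algebraic rearrangements analogous to those in Appendix~\ref{app:lemma1}, and the NEPv interpretation of \eqref{eq:KKT_condition_lb} then follows by identifying $\mu(\bar{\bf f})$ with $\lambda^{\sf lb}(\bar{\bf f})$ in \eqref{eq:lambda_KKT_2}.
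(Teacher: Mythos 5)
Your proposal is correct and follows essentially the same route as the paper, which itself omits the details and defers to the proof of Lemma~\ref{lem:KKT} in Appendix~\ref{app:lemma1}: decompose the smoothed objective, exploit scale invariance to drop the power constraint, differentiate each logarithmic factor, and rearrange the stationarity condition into the NEPv form. You also correctly isolate the one genuine structural difference from the perfect-CSIT case --- the leakage term is a single logarithm of $1+\sum(\cdot)$ rather than a LogSumExp, which is why $w_{\sf e}^{(s){\sf ub}}$ and $w_u^{(s){\sf ub}}$ appear as plain Rayleigh ratios and why the shared normalizer carries the extra ``$1+$'' --- so no gap remains.
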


\begin{proof}
    We omit the detailed proof here as it is similar to the one of Lemma~\ref{lem:KKT}.
\end{proof}
We remark that  Algorithm~\ref{algorithm:1} can be also used to find the principal eigenvector of \eqref{eq:KKT_condition_lb} to derive a superior KKT stationary point by replacing ${\bf{A}}_{\sf{KKT}}(\bar{{\bf{f}}})$ and ${\bf{B}}_{\sf{KKT}}(\bar{{\bf{f}}})$ with ${\bf{A}}^{\sf lb}_{\sf{KKT}}(\bar{{\bf{f}}})$ and ${\bf{B}}^{\sf lb}_{\sf{KKT}}(\bar{{\bf{f}}})$, respectively.
We call it as robust SSSE-GPI-RSMA (RSSSE-GPI-RSMA) algorithm.

\subsection{Complexity Analysis}

The computational complexity of our proposed algorithm hinges on the calculation of the ${\bf{B}}^{-1}_{\sf KKT} (\bar {\bf{f}}_{t})$ matrix.
Given that the matrix size is $N(K+1) \times N(K+1)$, the typical complexity for inverting ${\bf{B}}_{\sf KKT} (\bar {\bf{f}}_{t})$ is on the order of $\CMcal{O}\left((K+1)^3N^3\right)$.
However, leveraging the block diagonal structure of ${\bf{B}}_{\sf KKT}(\bar {\bf{f}}_{t})$, it becomes feasible to reduce the computational complexity to $\CMcal{O}\left((K+1)N^3\right)$.
This optimization arises by computing the inversion of each $N \times N$ sub-matrix separately, given that there are $(K+1)$ such sub-matrices.
Let us denote the variable $T_{\sf GPI}$, $T_{\sf WMMSE}$, and $T_{\sf SCA}$ as the representative count of total iterations for each algorithm.
Then, the total complexity order of the proposed algorithm is given as $\CMcal{O}\left(T_{\sf GPI}(K+1)N^3\right)$.
Similarly,  the complexity of the convex relaxation method proposed in \cite{joudeh2016sum}, which relies on a quadratically constrained quadratic program, can be expressed in terms of the order of $\CMcal{O}\left(T_{\sf WMMSE}K^{3.5}N^{3.5}\right)$.
In \cite{xia2022secure}, the joint WMMSE and SCA based algorithm is classified as a specialized SCA method, characterized by a complexity order of $\CMcal{O}\left(T_{\sf SCA}K^{3.5}N^4\right)$.
In this regard, the proposed algorithm demonstrates a more computationally efficient approach than the state-of-the-art precoding methods.

\section{Numerical Results}

In this section, we conduct comparison of the sum secrecy SEs achieved by the proposed algorithm and existing baseline methods.
We adopt the one-ring model as proposed for the spatial channel covariance matrix \cite{adhikary2013joint}.
The user channel covariance between the AP's antenna $a$ and $b$ is defined as
\begin{align}
    \label{eq:user_channel_covariance}
    \left[{\bf{R}}_k \right]_{a,b} = \frac{1}{2\Delta_k} \int_{\theta_k - \Delta_k}^{\theta_k + \Delta_k} e^{-j \frac{2\pi}{\psi} \Psi(x) ({\bf{r}}_a - {\bf{r}}_b)} {\rm d} x,
\end{align}
where $\left[{\bf{R}}_k \right]_{a,b}$ denotes the $(a, b)$th element of ${\bf{R}}_k$,  $\Delta_k$ is the angular spread of the user $k$, $\theta_k$ is angle-of-arrival (AoA) of user $k$, $\Psi(x) = \left[\cos(x), \sin(x) \right]$, and ${\bf{r}}_a$ is the position vector of the $a$th antenna.
With the Karhunen-Loeve model,
the channel vector ${\bf{h}}_k$ can be decomposed as  \cite{adhikary2013joint} 
${\bf{h}}_k = {\bf{U}}_k {\bf \Lambda}_k^{\frac{1}{2}} {\pmb{\zeta}}_k,$
where ${\bf \Lambda}_k \in \mathbb{C}^{r_k \times r_k}$ is a diagonal matrix that contains the non-zero eigenvalues of the channel covariance matrix ${\bf{R}}_k$, ${\bf{U}}_k \in \mathbb{C}^{N \times r_k}$ is the corresponding eigenvectors,
and $ {\pmb{\zeta}}_k \in \mathbb{C}^{r_k}$ is a independent and identically distributed complex Gaussian vector with zero mean and unit variance.
Similar to the treatment of user channels, the eavesdroppers' channels adhere to analogous principles in their modeling and characterization.

For CSIT acquisition, we consider frequency division duplex (FDD) systems  whose estimated channel is given as \cite{wagner2012large}
\begin{align}
    \hat {\bf{h}}_k &= {\bf{U}}_k {\bf \Lambda}_k^{\frac{1}{2}} \left(\sqrt{1 - \kappa^2}  {\pmb{\zeta}}_k + \kappa {\bf{v}}_k \right) = \sqrt{1 - \kappa^2} {\bf{h}}_k + {\bf{q}}_k,
\end{align}
where $\hat {\bf{h}}_k$ is the estimated channel vector of user $k$, ${\bf{v}}_k$ is drawn from IID $ \mathcal{CN}(0,1)$, ${\bf{q}}_k$ is the CSIT quantization error vector, and $\kappa$ is a parameter that determines the quality of the channel with $0 \le \kappa \le 1$. 
With \eqref{eq:estimated_ch}, the error covariance is derived as 
\begin{align}
    \mathbb{E}[{\pmb{\phi}}_{k} {\pmb{\phi}}_k^{\sf H}] = {\bf{\Phi}}_k = {\bf{U}}_k {\bf \Lambda}_k^{\frac{1}{2}} (2 - 2\sqrt{1 - \kappa^2}) {\bf \Lambda}_k^{\frac{1}{2}} {\bf{U}}_k^{\sf H}.
\end{align}

\subsection{Performance Evaluation}
To evaluate the effectiveness of the proposed algorithms, SSSE-GPI-RSMA and RSSSE-GPI-RSMA, we compare its performance with the baseline methods.
The baselines for comparison are as follows:
\begin{itemize}
    \item {\bf{SSE-SCA-RSMA}}: This method is the joint WMMSE and SCA-based RSMA sum secrecy SE (SSE) maximization algorithm in \cite{xia2022secure}.
    Although this algorithm treats other users as potential eavesdroppers, it does not account for the presence of actual eavesdroppers.
To facilitate comparison with the our proposed algorithm, we establish a minimum user rate threshold, $R_{\sf{s}}^{\sf{th}}$, assigning it a sufficiently small value and ignore the user weights.
    \item {\bf{SE-WMMSE-RSMA}}: WMMSE-based RSMA algorithm for sum SE maximization without security consideration proposed in \cite{joudeh2016sum} under perfect CSIT and partial CSIT.
\end{itemize}

For the initialization, we use the precoding vector ${\bf{f}}_k = {\hat {\bf{h}}}_k$ employing maximum ratio transmission (MRT) strategy.
Specifically for the common stream, we utilize the average of the channel vectors for initialization.
We set $\Delta_k = \pi/6, \kappa = 0.4, \sigma^2 = 1, \epsilon = 0.05$, and  $t_{\rm max} = 100$, and  $\sigma^2 = \sigma^2_{\sf e}$. 
We also set $\alpha_1 = \alpha_2$ which is tuned empirically.
Unless otherwise stated, it is assumed that the users are distributed within a sector of $\pi/6$ radians and eavesdroppers are randomly positioned.
In other words, for the correlated channel, the angular difference between users, $\theta_k - \theta_{k'}$, with $\theta_k$ defined in \eqref{eq:user_channel_covariance}, is $\pi/6$.

\begin{figure}[!t]\centering
    \begin{subfigure}
       [Perfect CSIT case]{\resizebox{\columnwidth}{!}{\includegraphics{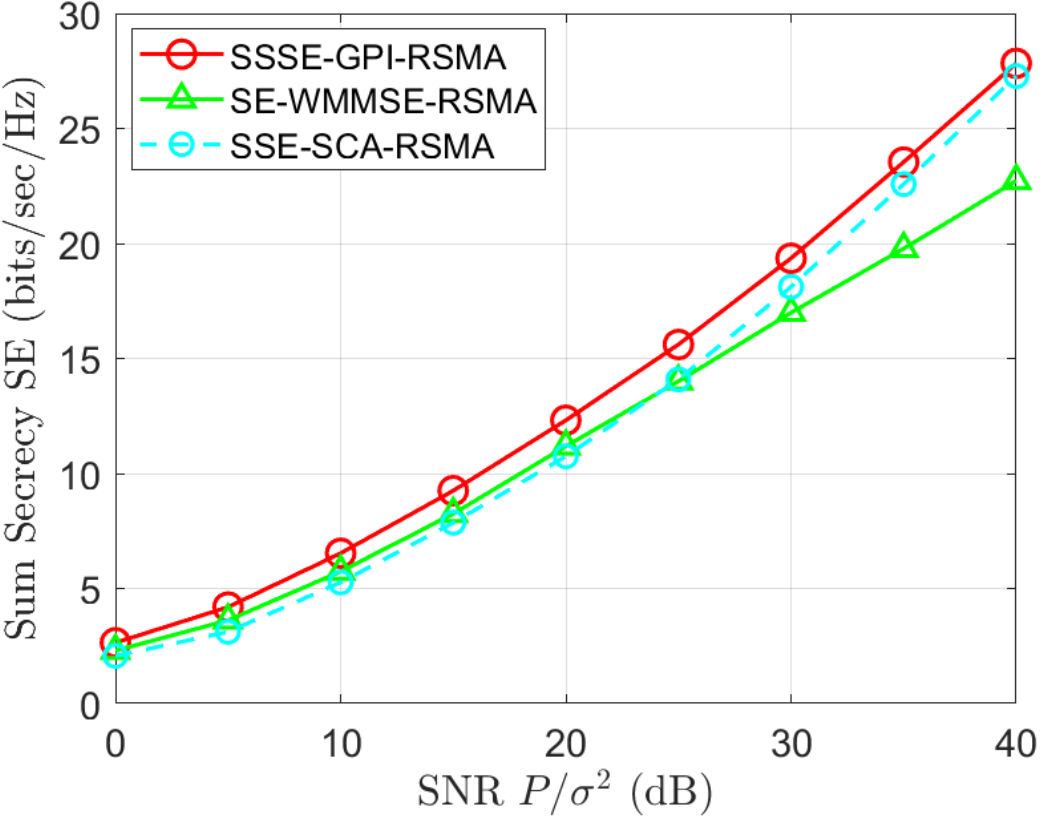}}}
    \end{subfigure}
    \begin{subfigure}
        [Partial CSIT case]{\resizebox{\columnwidth}{!}{\includegraphics{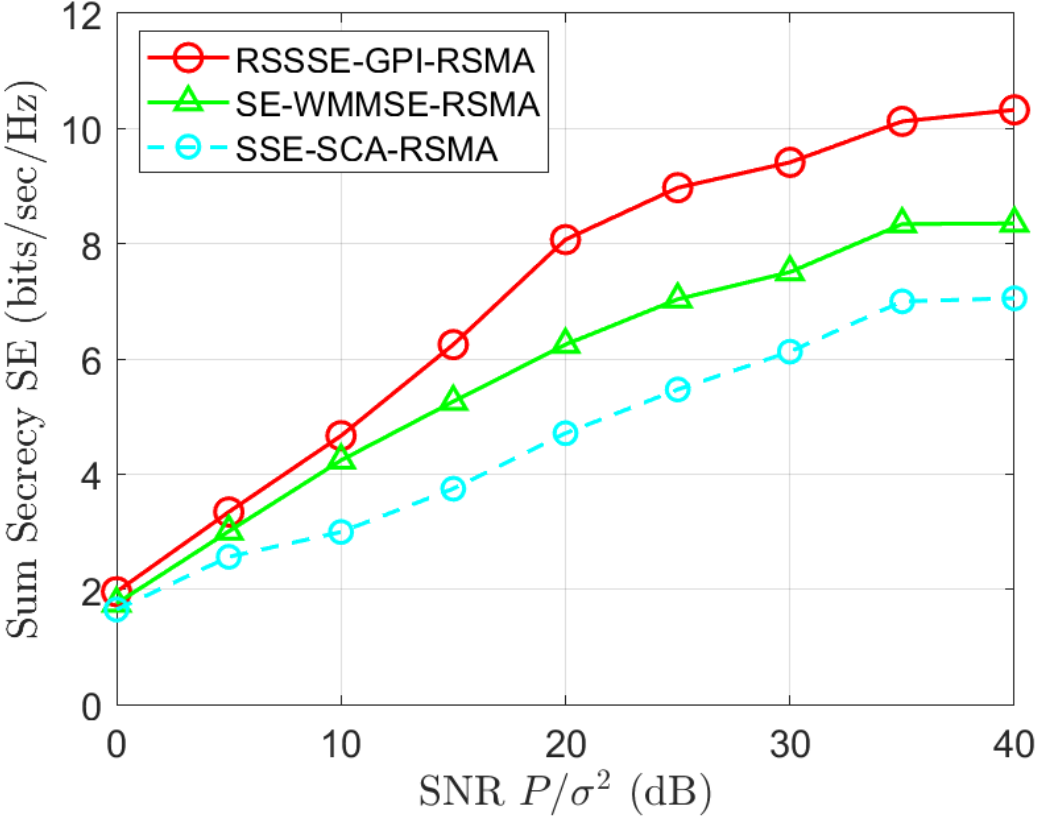}}}
    \end{subfigure}
    
    \caption{
    The sum secrecy SE versus SNR for $N = 4$ AP antennas, $K = 4$ total users, $S = 2$ secret users, $M = 2$ normal users, and $E = 2$ eavesdroppers.}
    \label{fig:SumSecrecySEversusSNR}
\end{figure}

\subsection{Secrecy SE versus SNR}

Fig.~\ref{fig:SumSecrecySEversusSNR} illustrates the sum secrecy SE as a function of the signal-to-noise ratio (SNR), defined by $P/\sigma^2$, in a general scenario where multiple users—comprising both secret and normal users—coexist alongside eavesdroppers.
In Fig.~\ref{fig:SumSecrecySEversusSNR}(a), the performance under perfect CSIT is analyzed.
Here, the proposed SSSE-GPI-RSMA algorithm emerges as the top performer, showcasing the highest sum secrecy SE.
The SSE-SCA-RSMA algorithm demonstrates comparable efficacy, closely trailing the performance of the proposed algorithm.
The performance disparity between SSSE-GPI-RSMA and SSE-SCA-RSMA can be attributed to the fundamental difference in security optimization approaches.
While SSSE-GPI-RSMA employs targeted security optimization for only secret users, SSE-SCA-RSMA implements uniform security measures across all users.
This strategic distinction demonstrates the superior performance of SSSE-GPI-RSMA, particularly in heterogeneous user environments across diverse SNR regimes.
The performance gap between SE-WMMSE-RSMA and its counterparts becomes more pronounced in the high SNR regime.
This degradation stems from inherent limitation of SE-WMMSE-RSMA in addressing information leakage optimization.
As the SNR increases, the impact of inter-user information leakage becomes more substantial, revealing SE-WMMSE-RSMA's inability to effectively mitigate this interference.

Similarly, Fig.~\ref{fig:SumSecrecySEversusSNR}(b) presents a performance of the sum secrecy SE under partial CSIT, where the proposed RSSSE-GPI-RSMA algorithm demonstrates superior performance.
Notably, in contrast to Fig.~\ref{fig:SumSecrecySEversusSNR}(a), SE-WMMSE-RSMA outperforms SSE-SCA-RSMA in this scenario.
This performance difference can be attributed to the stringent security constraints inherent in SSE-SCA-RSMA, highlighting the fundamental trade-off between enhanced security measures and system efficiency.
The comprehensive evaluation presented in Fig.~\ref{fig:SumSecrecySEversusSNR} validates the robustness of our proposed algorithm across diverse operational scenarios, maintaining consistent performance in both perfect and imperfect CSIT conditions with heterogeneous user environments.

\begin{figure}[!t]\centering
   \begin{subfigure}
   [$M=4$ normal users only]{\resizebox{\columnwidth}{!}{\includegraphics{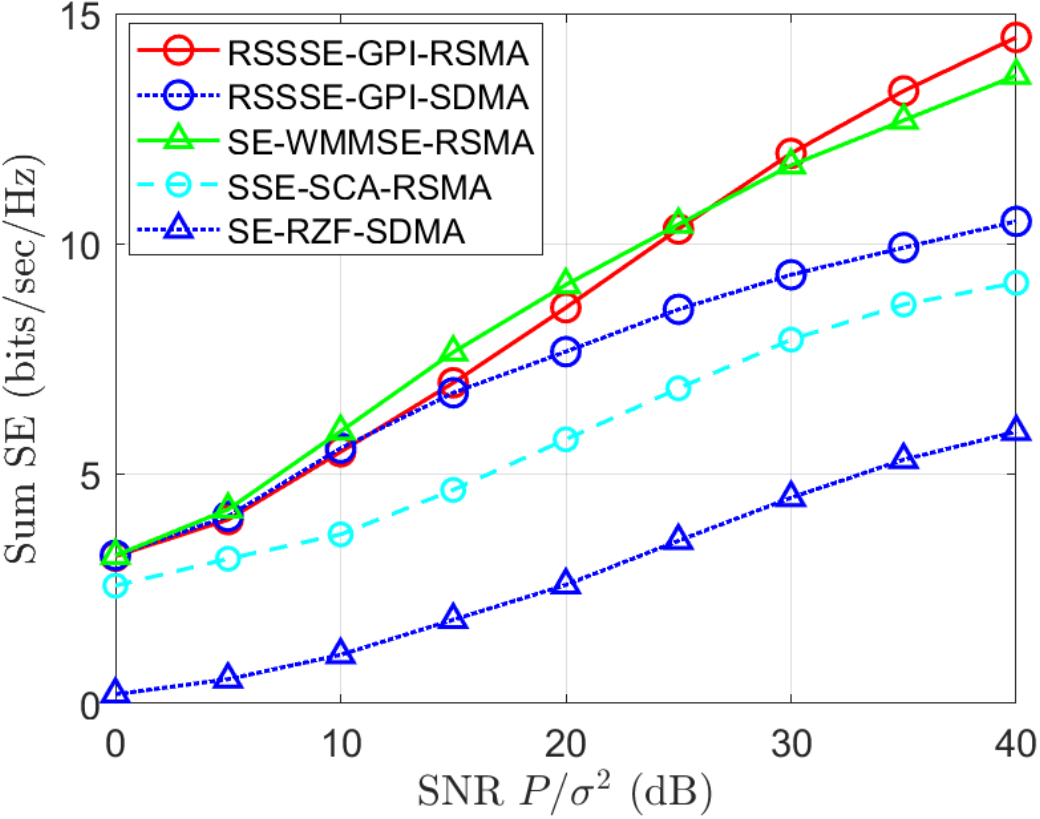}}}
    \end{subfigure}
    \begin{subfigure}
    [$S=4$ secret users only]{\resizebox{\columnwidth}{!}{\includegraphics{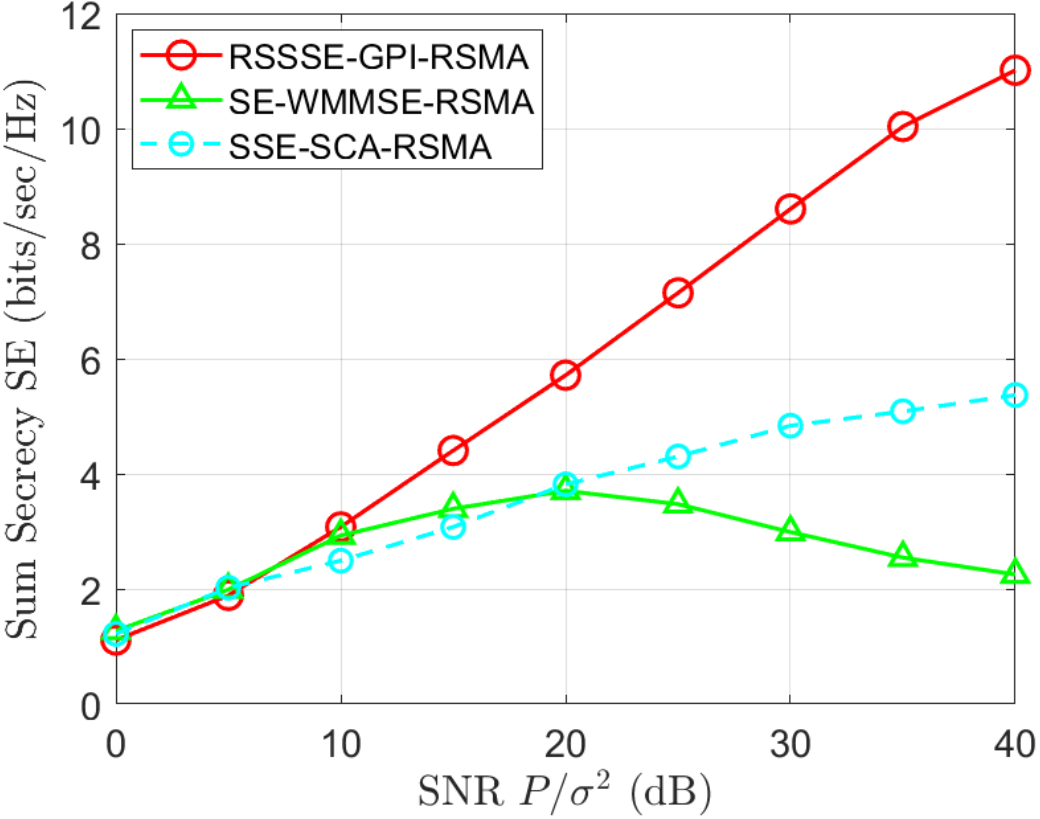}}}
    \end{subfigure}
    \caption{The sum secrecy SE versus SNR for $N = 4$ AP antennas, $K=4$ users, $E = 0$ eavesdroppers under partial CSIT.} 
    \label{fig:SumSecrecySEversusMandS}
\end{figure}

Fig.~\ref{fig:SumSecrecySEversusMandS} displays the variations in sum secrecy SE with respect to SNR under specialized scenarios: exclusively composed of normal users and secret users, respectively.
In Fig.~\ref{fig:SumSecrecySEversusMandS}(a), SDMA-based algorithms are introduced to facilitate a fair comparative analysis between RSMA and SDMA approaches.
This methodological choice stems from the fundamental architectural assumption: RSMA applies security objectives solely to the private stream, excluding the common component.
For performance metrics independent of security considerations, both algorithms provide a valid basis for comparison.
The following SDMA-based algorithms are implemented in this analysis:
\begin{itemize}
\item {\bf{RSSSE-GPI-SDMA}}: The proposed GPI-based algorithm for sum secrecy SE maximization without a common stream of proposed RSSSE-GPI-RSMA algorithm.
\item {\bf{SE-RZF-SDMA}}: The conventional linear regularized zero-forcing (RZF) precoder for sum SE maximization.
\end{itemize}
In Fig.~\ref{fig:SumSecrecySEversusMandS}(a), both the proposed RSSSE-GPI-RSMA and SE-WMMSE-RSMA algorithms exhibit the highest sum SE performance.
This comparable performance pattern emerges in scenarios with normal users only, where stringent security optimization is not essential.
While SSE-SCA-RSMA is designed to minimize information leakage to potential eavesdroppers, such robust security measures prove redundant in normal-user environments.
Consequently, the SSE-SCA-RSMA algorithm experiences performance degradation attributed to its rigorous security constraints.
This effect becomes increasingly pronounced as the SNR increases, leading to a widening performance gap between the SSE-SCA-RSMA and the SE-WMMSE-RSMA algorithms.
The RSMA advantage is evident in the performance differential between RSSSE-GPI-RSMA and RSSSE-GPI-SDMA.
In this figure, we can observe the importance of balanced security protocols that optimize system efficiency.

Contrasting with the scenario in Fig.~\ref{fig:SumSecrecySEversusMandS}(a), Fig.~\ref{fig:SumSecrecySEversusMandS}(b) demonstrates the impact of security optimization for secret users.
The SE-WMMSE-RSMA algorithm exhibits notable performance degradation beyond 20 dB, primarily due to increased inter-user signal leakage at higher SNR values.
This performance characteristic emphasizes the critical need for security optimization in scenarios exclusively comprising secret users to ensure secure communication.
In other words, the performance gap observed between the SSE-SCA-RSMA and SE-WMMSE-RSMA algorithms serves as a quantifiable measure of the gains achieved through security optimization.
The comprehensive analysis in Fig.~\ref{fig:SumSecrecySEversusMandS} validates our algorithm's adaptability through user categorization, effectively addressing varying security requirements across different scenarios.

\begin{figure}[!t]\centering
   \subfigure{\resizebox{\columnwidth}{!}{\includegraphics{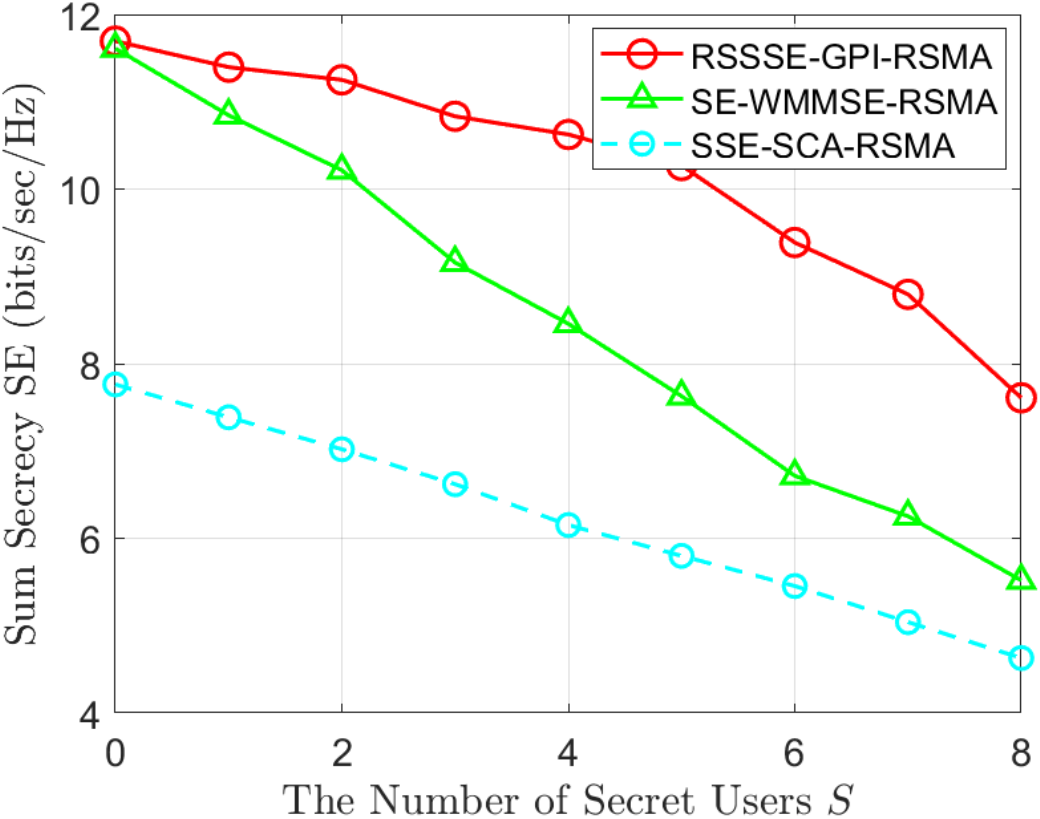}}}
       \vspace{-0.5em}
   \caption{The sum secrecy SE versus the number of secret users $S$ for $N = 8$ AP antennas, $K = 8$ total users, $E = 0$ eavesdroppers, and SNR = 10 dB under partial CSIT.}
   \label{fig:SumSecrecySEversusSU_partial}
    \vspace{-1em}
\end{figure}

\subsection{Secrecy SE versus Number of Users and Eavesdroppers}
Fig.~\ref{fig:SumSecrecySEversusSU_partial} illustrates the impact of varying the number of secret users on sum secrecy SE performance.
The increasing number of secret users necessitates enhanced security optimization, leading to performance degradation across all evaluated algorithms due to greater security requirements.
Despite this degradation, the proposed algorithm maintains superior performance compared to baseline approaches.
The SE-WMMSE-RSMA algorithm exhibits the most significant performance decline, attributable to its design focus on sum SE maximization without explicit security considerations.
This architectural limitation results in increased vulnerability to information leakage as the number of secret users grows, demonstrating the critical interplay between SE optimization and security requirements in wireless networks.

\begin{figure}[!t]\centering
   \subfigure{\resizebox{\columnwidth}{!}{\includegraphics{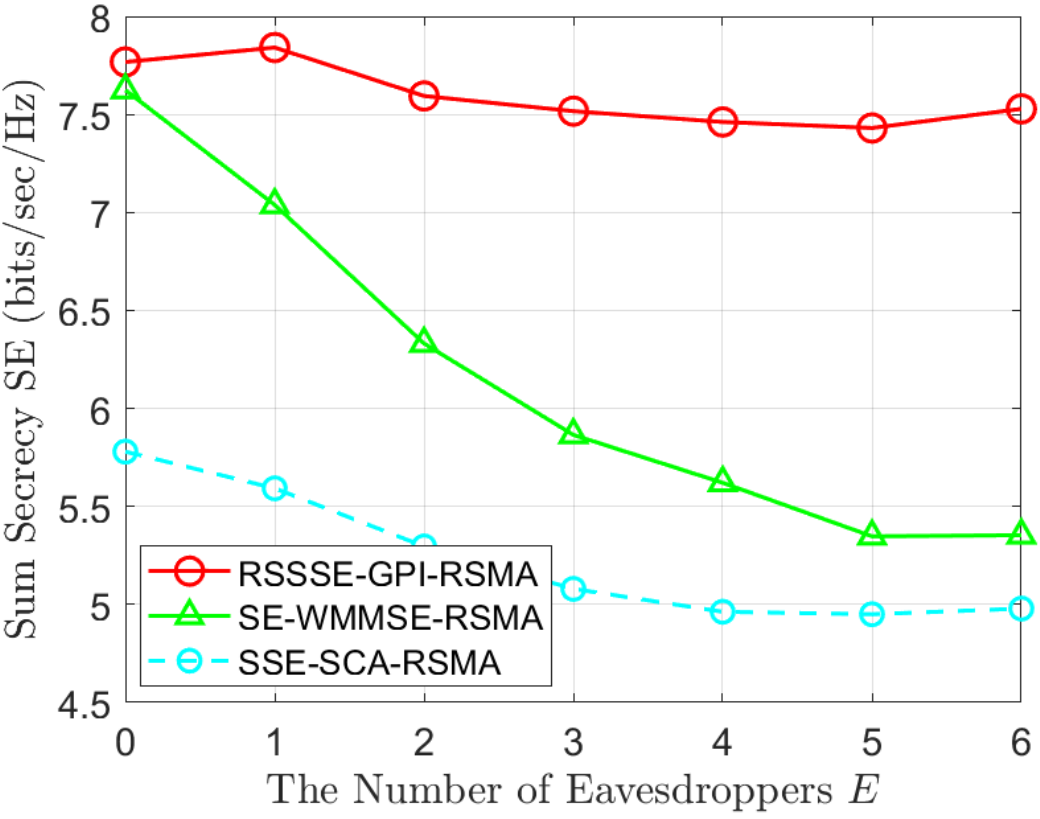}}}
       \vspace{-0.5em}
   \caption{The sum secrecy SE versus the number of eavesdroppers for $N = 4$ AP antennas, $S = 4$ secret users, $M = 0$ normal users and SNR = 20 dB under partial CSIT.}
   \label{fig:SumSecrecySEversusEve_partial}
\end{figure}

Fig.~\ref{fig:SumSecrecySEversusEve_partial} examines the impact of increasing the number of eavesdroppers on sum secrecy SE at 20 dB SNR, with randomly distributed legitimate users and adjacent eavesdroppers.
The proposed algorithm demonstrates superior performance compared to baseline methods. 
Although all algorithms exhibit performance degradation with increasing the number of eavesdroppers, the proposed RSSSE-GPI-RSMA algorithm reveals the robust performance by  explicitly accounting for  eavesdroppers in the precoder design.
Conversely, SSE-SCA-RSMA shows gradual performance degradation due to its limitation of considering only legitimate users as potential eavesdroppers, leading to increased information leakage.
SE-WMMSE-RSMA exhibits the most significant performance decline due to its lack of eavesdropper consideration.
These results demonstrate the importance of incorporating comprehensive security mechanisms in RSMA algorithms for maintaining secrecy SE in eavesdropper-rich environments.

\begin{figure}[!t]\centering
   \subfigure{\resizebox{\columnwidth}{!}{\includegraphics{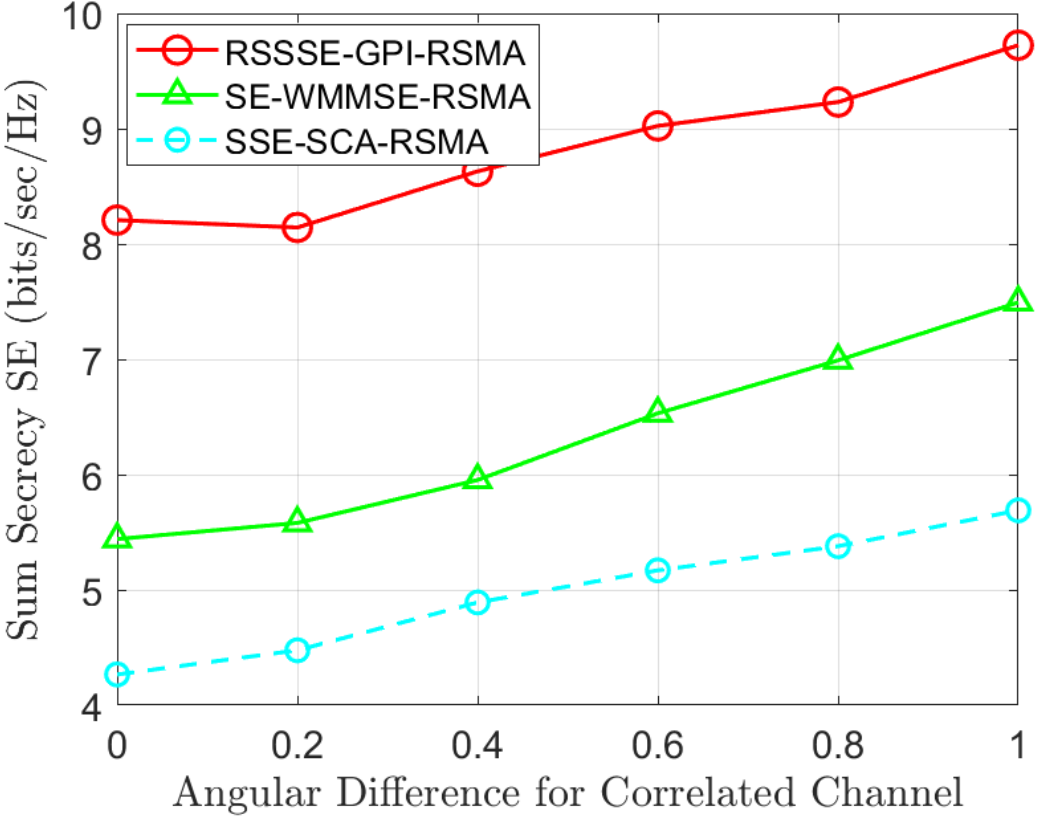}}}
       \vspace{-0.5em}
   \caption{The sum secrecy SE versus the angular difference for correlated channel $\theta_k - \theta_{k'}$ for $N = 6$ AP antennas, $S = 4$ secret users, $M = 2$ normal users, $E = 2$ eavesdroppers and SNR = 20 dB under partial CSIT.} \label{fig:SumSecrecySEversusCorrelation_partial}
    \vspace{-1em}
\end{figure}

\subsection{Secrecy SE versus Channel Condition}
Fig.~\ref{fig:SumSecrecySEversusCorrelation_partial} shows the change in sum secrecy SE according to $\theta_k - \theta_{k'}$, the angular difference parameter between users.
All algorithms exhibit improved performance with increasing angular difference due to reduced inter-user interference.
The proposed RSSSE-GPI-RSMA algorithm consistently shows the highest performance, significantly surpassing both SE-WMMSE-RSMA and SSE-SCA-RSMA.
SSE-SCA-RSMA shows lower performance due to its stringent security constraints that considers all RSMA users as pontential eavesdroppers.
These results validate the effectiveness of our optimization approach across varying angular difference conditions.

\begin{figure}[!t]\centering
   \subfigure{\resizebox{\columnwidth}{!}{\includegraphics{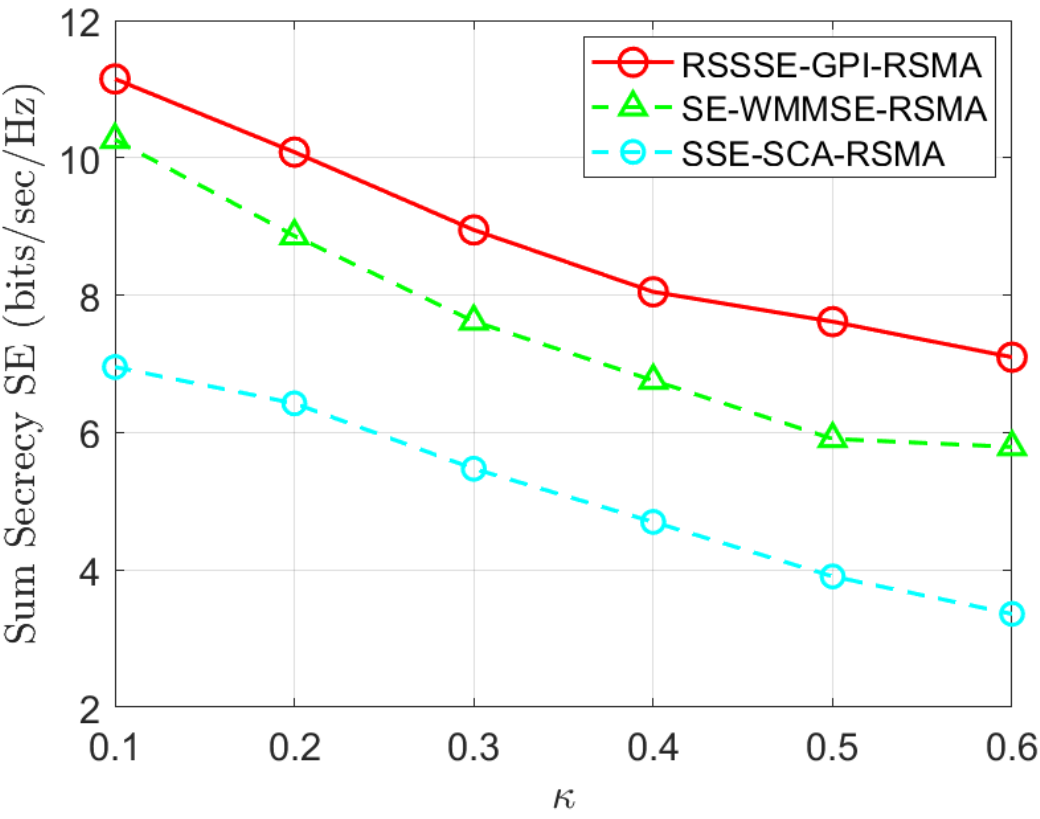}}}
       \vspace{-0.5em}
   \caption{The sum secrecy SE versus the quality of the channel $\kappa$ for $N = 4$ AP antennas, $S = 2$ secret users, $M = 2$ normal users, $E = 2$ eavesdroppers and SNR = 20 dB under partial CSIT.} \label{fig:SumSecrecySEversusKappa_partial}
    \vspace{-1em}
\end{figure}

Fig.~\ref{fig:SumSecrecySEversusKappa_partial} depicts the relationship between sum secrecy SE and channel quality parameter $\kappa$.
All algorithms exhibit performance degradation with increasing $\kappa$, indicating sensitivity to deteriorating channel conditions.
Notably, the proposed algorithm consistently outperforms the others across all sections, highlighting its superior capability.
Fig.~\ref{fig:SumSecrecySEversusKappa_partial} validates the robustness of our proposed algorithm in practical scenarios with imperfect channel conditions.

\begin{figure}[!t]\centering
   \subfigure{\resizebox{\columnwidth}{!}{\includegraphics{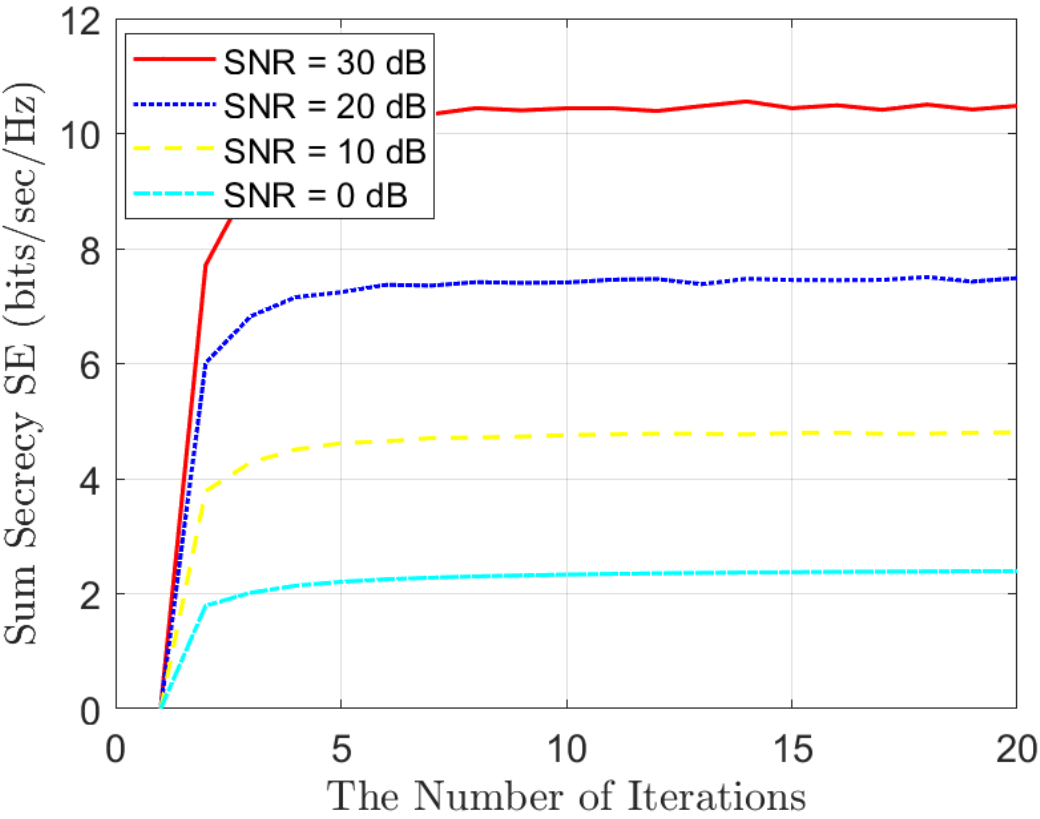}}}
       \vspace{-0.5em}
   \caption{Convergence results in terms of the sum secrecy SE for $N = 4$ AP antennas, $S = 2$ secret users, $M = 2$ normal users, and $E = 2$ eavesdroppers under partial CSIT.} \label{fig:SumSecrecySEversusConvergence_partial}
    \vspace{-1em}
\end{figure}

\subsection{Convergence}
Fig.~\ref{fig:SumSecrecySEversusConvergence_partial} displays the convergence outcomes relative to the number of iterations for SNR values in the set $\left\{0,10,20,30\right\}$ dB.
The proposed RSSSE-GPI-RSMA algorithm achieves fast convergence within 7 iterations across all the specified SNR levels, which is relatively quick. 
Consequently, the proposed algorithm is more advantageous for practical applications compared to other advanced precoding techniques, as it presents significantly reduced complexity as discussed and achieves fast convergence.

\section{Conclusion}
In this study, we introduced a precoding algorithm that integrates RSMA with security consideration, initially under perfect CSIT, subsequently extending its applicability to scenarios with partial CSIT.
Given the inherent non-smooth and non-convex nature of the problem under both CSIT conditions, we employed the LogSumExp technique for problem smoothing, further reformulating the problem into a variant of the Rayleigh quotient to enhance tractability.
For scenarios characterized by limited CSIT, it became imperative to leverage the conditional average SE, employing lower bounding and approximation strategies for practicality.
By deriving the stationary condition of the reformulated problem and mapping it onto a nonlinear eigenvalue problem, we were able to pinpoint the most advantageous local optimal solution.
Our simulation results demonstrated the superiority of the proposed method over existing state-of-the-art algorithms, particularly in achieving the highest sum secrecy SE.
The primary advantage of our proposed algorithm is its ability to accommodate multiple users with heterogeneous privacy priorities and eavesdroppers within a general framework under limited CSIT. 
Consequently, our algorithm ensures higher sum secrecy SE compared to existing baselines, significantly advancing secure wireless communications.

\appendices
\section{Proof of Lemma \ref{lem:KKT}}
\label{app:lemma1}
It is assumed that we employ the maximum power of the precoder $\|\bar\bff\|^2=1$.
Since the problem in \eqref{eq:op_Rayleigh} with the reformulated objective function is invariant up to the scaling of $\bar\bff$, we can indeed disregard the power constraint in this context.
Let the objective function in \eqref{eq:op_Rayleigh} as $L(\bar {\bf f})$. 
Then $L(\bar {\bf f})$ itself is a Lagrangian function.
We consider $L(\bar {\bf f}) = L_1(\bar{{\bf{f}}}) + L_2(\bar{{\bf{f}}}) + L_3(\bar{{\bf{f}}}) + L_4(\bar{{\bf{f}}})$
where
\begin{align}
    \nonumber
    &L_1(\bar{{\bf{f}}}) = \ln\left(\sum_{k \in \CMcal{K}} \left(\frac{\bar{\bf{f}}^{\sf{H}} {\bf{A}}_{{\sf{c}},k}\bar{\bf{f}}}{\bar{\bf{f}}^{\sf{H}} {\bf{B}}_{{\sf{c}},k}\bar{\bf{f}}}\right)^{-\frac{1}{{\alpha}\ln2}}\right)^{-\alpha}, \\
    \nonumber
    &L_2(\bar{{\bf{f}}}) = \sum_{s\in {\CMcal{S}}}\log_2 \left( \frac{\bar{\bf{f}}^{\sf{H}} {\bf{A}}_{s} \bar{\bf{f}}}{\bar{\bf{f}}^{\sf{H}} {\bf{B}}_{s}\bar{\bf{f}}} \right), \\
    \nonumber
    &L_3(\bar{{\bf{f}}}) = -\sum_{s\in {\CMcal{S}}}\ln\left(\sum_{{\sf{e}} \in \CMcal{E}} \left( \frac{\bar{\bf{f}}^{\sf{H}} {\bf{A}}_{\sf{e}}^{(s)}\bar{\bf{f}}}{\bar{\bf{f}}^{\sf{H}} {\bf{B}}_{\sf{e}}^{(s)}\bar{\bf{f}}} \right)^{\frac{1}{\alpha\ln2}}\! \!+\! \!\sum_{u \in \CMcal{K}\setminus\{s\}} \left( \frac{\bar{\bf{f}}^{\sf{H}} {\bf{C}}_u^{(s)}\bar{\bf{f}}}{\bar{\bf{f}}^{\sf{H}} {\bf{D}}_u^{(s)}\bar{\bf{f}}} \right)^{\frac{1}{\alpha\ln2}}\right)^\alpha,
    \\\nonumber
    &L_4(\bar{{\bf{f}}}) = \sum_{m\in {\CMcal{M}}} \log_2 \left( \frac{\bar{\bf{f}}^{\sf{H}} {\bf{A}}_{m} \bar{\bf{f}}}{\bar{\bf{f}}^{\sf{H}} {\bf{B}}_{m}\bar{\bf{f}}} \right).
\end{align}
To derive the first-order Karush-Kuhn-Tucker (KKT) condition, we solve $\frac{\partial L(\bar{{\bf{f}}})}{\partial \bar{{\bf{f}}}^{\sf{H}}}=0$.
Utilizing the following derivative:
\begin{align}
    \partial\left(\frac{{\bar{{\bf{f}}}}^{\sf{H}}{{\bf{A}}}{\bar{{\bf{f}}}}}{{\bar{{\bf{f}}}}^{\sf{H}}{{\bf{B}}}{\bar{{\bf{f}}}}}\right)/\partial{\bar{{\bf{f}}}}^{\sf{H}} = 2\times\frac{{\bar{{\bf{f}}}}^{\sf{H}}{\bf{A}}{\bar{{\bf{f}}}}}{{\bar{{\bf{f}}}}^{\sf{H}}{\bf{B}}{\bar{{\bf{f}}}}}\left[\frac{{\bf{A}}{\bar{{\bf{f}}}}}{{\bar{{\bf{f}}}}^{\sf{H}}{\bf{A}}{\bar{{\bf{f}}}}}-\frac{{\bf{B}}{\bar{{\bf{f}}}}}{{\bar{{\bf{f}}}}^{\sf{H}}{\bf{B}}{\bar{{\bf{f}}}}}\right],
\end{align}
the partial derivative of each component can be expressed as
\begin{align}
    \nonumber
    \frac{\partial L_1(\bar{{\bf{f}}})}{\partial \bar{{\bf{f}}}^{\sf{H}}} &= \frac{2}{\ln2} \sum_{k \in \CMcal{K}}\left[\frac{w_{{\sf{c}},k}(\bar{\bf{f}})}{\sum_{l \in \CMcal{K}}w_{{\sf{c}},l}(\bar{\bf{f}})}\left\{\frac{{{\bf{A}}_{{\sf{c}},k}}{\bar{{\bf{f}}}}}{{\bar{{\bf{f}}}}^{\sf{H}}{\bf{A}}_{{\sf{c}},k}{\bar{{\bf{f}}}}}-\frac{{\bf{B}}_{{\sf{c}},k}{\bar{{\bf{f}}}}}{{\bar{{\bf{f}}}}^{\sf{H}}{\bf{B}}_{{\sf{c}},k}{\bar{{\bf{f}}}}}\right\}\right], \\
    \nonumber
    \frac{\partial L_2(\bar{{\bf{f}}})}{\partial \bar{{\bf{f}}}^{\sf{H}}} &= \frac{2}{\ln2} \sum_{s\in \CMcal{S}} \left[\frac{{\bf{A}}_s{\bar{{\bf{f}}}}}{{\bar{{\bf{f}}}}^{\sf{H}}{\bf{A}}_s{\bar{{\bf{f}}}}}-\frac{{\bf{B}}_s{\bar{{\bf{f}}}}}{{\bar{{\bf{f}}}}^{\sf{H}}{\bf{B}}_s{\bar{{\bf{f}}}}}\right], 
    \\
    \nonumber
    \frac{\partial L_3(\bar{{\bf{f}}})}{\partial \bar{{\bf{f}}}^{\sf{H}}} &= -\frac{2}{\ln2}\sum_{s\in \CMcal{S}} \left[\frac{\sum_{{\sf{e}} \in \CMcal{E}}w_{{\sf{e}}}^{(s)}(\bar{\bf{f}})\left\{\frac{{{\bf{A}}_{\sf{e}}^{(s)}}{\bar{{\bf{f}}}}}{{\bar{{\bf{f}}}}^{\sf{H}}{\bf{A}}_{\sf{e}}^{(s)}{\bar{{\bf{f}}}}}-\frac{{\bf{B}}_{\sf{e}}^{(s)}{\bar{{\bf{f}}}}}{{\bar{{\bf{f}}}}^{\sf{H}}{\bf{B}}_{\sf{e}}^{(s)}{\bar{{\bf{f}}}}}\right\}}{\sum_{i \in \CMcal{E}}w_{{\sf{i}}}^{(s)}(\bar{\bf{f}}) + \sum_{i \in \CMcal{K}\setminus\{s\}}w_{i}^{(s)}(\bar{\bf{f}})} \right. \\
    \nonumber
    &\quad + \left. \frac{\sum_{u \in \CMcal{K}\setminus\{s\}}w_{u}^{(s)}(\bar{\bf{f}})\left\{\frac{{{\bf{C}}_u^{(s)}}{\bar{{\bf{f}}}}}{{\bar{{\bf{f}}}}^{\sf{H}}{\bf{C}}_u^{(s)}{\bar{{\bf{f}}}}}-\frac{{\bf{D}}_u^{(s)}{\bar{{\bf{f}}}}}{{\bar{{\bf{f}}}}^{\sf{H}}{\bf{D}}_u^{(s)}{\bar{{\bf{f}}}}}\right\}}{\sum_{i \in \CMcal{E}}w_{{\sf{i}}}^{(s)}(\bar{\bf{f}}) + \sum_{i \in \CMcal{K}\setminus\{s\}}w_{i}^{(s)}(\bar{\bf{f}})} \right], \\
    \frac{\partial L_4(\bar{{\bf{f}}})}{\partial \bar{{\bf{f}}}^{\sf{H}}} &= \frac{2}{\ln2} \sum_{m\in {\CMcal{M}}} \left[\frac{{\bf{A}}_m{\bar{{\bf{f}}}}}{{\bar{{\bf{f}}}}^{\sf{H}}{\bf{A}}_m{\bar{{\bf{f}}}}}-\frac{{\bf{B}}_m{\bar{{\bf{f}}}}}{{\bar{{\bf{f}}}}^{\sf{H}}{\bf{B}}_m{\bar{{\bf{f}}}}}\right],
\end{align}
where $w_{{\sf{c}},k}(\bar{\bf{f}})$, $w_{{\sf{e}}}^{(s)}(\bar{\bf{f}})$, and $w_{u}^{(s)}(\bar{\bf{f}})$ are defined in Lemma~\ref{lem:KKT}.
The first-order KKT condition holds when
\begin{align} \label{eq:KKT}
    \frac{\partial L_1(\bar{{\bf{f}}})}{\partial \bar{{\bf{f}}}^{\sf{H}}} + \frac{\partial L_2(\bar{{\bf{f}}})}{\partial \bar{{\bf{f}}}^{\sf{H}}} + \frac{\partial L_3(\bar{{\bf{f}}})}{\partial \bar{{\bf{f}}}^{\sf{H}}} + \frac{\partial L_4(\bar{{\bf{f}}})}{\partial \bar{{\bf{f}}}^{\sf{H}}} = 0.
\end{align}
Carefully re-arranging the components in \eqref{eq:KKT}, the first-order KKT condition can be expressed equivalently as
\begin{align}
    {\bf{A}}_{\sf{KKT}}(\bar{{\bf{f}}})\bar{{\bf{f}}} = {\lambda}(\bar{{\bf{f}}}){\bf{B}}_{\sf{KKT}}(\bar{{\bf{f}}})\bar{{\bf{f}}},
\end{align}
where  ${\bf A}_{\sf KKT}$ and ${\bf B}_{\sf KKT}$ are defined in Lemma~\ref{lem:KKT}.
Given that ${\bf{B}}_{\sf{KKT}}(\bar{{\bf{f}}})$  exhibits Hermitian properties and maintains full rank, it is therefore invertible.
This concludes the proof.
\qed

\section{Proof of Lemma~\ref{sum_approx}}
\label{app:lemma2}
    We begin by considering the case for $N = 2$.
    Then,
    \begin{align}
        \mathbb{E} \left[\log_2 \left(1 + \sum_{i=1}^2\frac{X_i}{Y_i}\right)\right] &= \mathbb{E} \left[\log_2 \left(1 + \left\{\frac{X_1}{Y_1} + \frac{X_2}{Y_2}\right\}\right)\right] \\
        &= \mathbb{E} \left[\log_2 \left(1 + \left\{\frac{X_1Y_2 + X_2Y_1}{Y_1Y_2}\right\}\right)\right] \\
        &\stackrel{(a)}\approx \log_2 \left(1 + \frac{\mathbb{E}\left[X_1Y_2 + X_2Y_1\right]}{\mathbb{E}\left[Y_1Y_2\right]}\right) \\
        &\stackrel{(b)}= \log_2 \left(1 + \sum_{i=1}^2 \frac{\mathbb{E} \left[X_i\right]}{\mathbb{E} \left[Y_i\right]}\right),
    \end{align}
    where $(a)$ comes from Lemma 1 in \cite{zhang2014power}, and $(b)$ follows from the assumption of independence.
    It is straightforward to extend to $N>2$, yielding the  approximation result in \eqref{lemma_approx}.

\bibliographystyle{IEEEtran}
\bibliography{bibtext}

\end{document}